\documentclass[reprint,3p,times,onecolumn,12pt]{elsarticle} \geometry{paper=a4paper,margin=2.25cm}

\usepackage{fancyhdr}
\pagestyle{fancy}
\lhead{Journal Ref.: Chaos Solitons \& Fractals {\bf 155} (2022) 111652}
\rhead{\hfill DOI: \href{https://doi.org/10.1016/j.chaos.2021.111652}{10.1016/j.chaos.2021.111652}}

\makeatletter
\def\ps@pprintTitle{%
	\let\@oddhead\@empty
	\let\@evenhead\@empty
	\def\@oddfoot{Journal Ref.: Chaos Solitons \& Fractals {\bf 155} (2022) 111652. \hfill DOI: \href{https://doi.org/10.1016/j.chaos.2021.111652}{10.1016/j.chaos.2021.111652}}%
	\let\@evenfoot\@oddfoot
}
\makeatother

\usepackage{amsmath,amsthm}
\usepackage{amsfonts}
\usepackage{amssymb}
\usepackage{hyperref}
\usepackage{graphicx}
\usepackage{latexsym}
\usepackage{amssymb}
\usepackage{amscd,amssymb}
\usepackage[arrow,matrix]{xy}
\usepackage{graphicx}
\usepackage{epstopdf}

\usepackage{soul}
\biboptions{sort&compress}
\usepackage{setspace}
\usepackage{amscd,amsmath,amsthm,amssymb,color}
\usepackage{xcolor}
\allowdisplaybreaks
\theoremstyle{plain}
\newtheorem{Theorem}{Theorem}[section]


\tolerance=2
\emergencystretch=\maxdimen
\hyphenpenalty=10000
\hbadness=10000

\newcommand{\bea}{\begin{eqnarray}}
	\newcommand{\eea}{\end{eqnarray}}
\newcommand{\bes}{\begin{subequations}}
	\newcommand{\ees}{\end{subequations}}

\begin{document}
\title{Localized Nonlinear Waves on Spatio-Temporally Controllable Backgrounds for a (3+1)-Dimensional Kadomtsev-Petviashvili-Boussinesq Model in Water Waves} 
	
\author[nit]{\large Sudhir Singh}
\author[apctp]{\large  K. Sakkaravarthi \corref{cor}}
\author[nit]{\large K. Murugesan}
	
\address[nit]{Department of Mathematics, National Institute of Technology, Tiruchirappalli -- 620015, Tamil Nadu, India}
\address[apctp]{Young Scientist Training Program, Asia-Pacific Center for Theoretical Physics (APCTP),\\ POSTECH Campus, Pohang -- 37673, Republic of Korea}

\cortext[cor]{Corresponding author Email address: ksakkaravarthi@gmail.com (K. Sakkaravarthi)} 


\setstretch{1.150}	
\begin{abstract} 
	The physics of nonlinear waves on variable backgrounds and the relevant mathematical analysis continues to be the challenging aspect of the study. In this work, we consider a (3+1)-dimensional nonlinear model describing the dynamics of water waves and construct nonlinear wave solutions on spatiotemporally controllable backgrounds for the first time by using a simple mathematical tool auto-B\"acklund transformation. Mainly, we unravel physically interesting features to control and manipulate the dynamics of nonlinear waves through the background. We construct a single kink-soliton and rogue wave, respectively, by adopting an exponential function and general polynomial of degree two as initial seed solutions. We choose arbitrary periodic, localized and combined wave backgrounds by incorporating Jacobi elliptic functions and investigate the modulation of these two nonlinear waves with a clear analysis and graphical demonstrations. The solutions derived in this work give us sufficient freedom to generate exotic nonlinear coherent structures on variable backgrounds and open up an interesting direction to explore the dynamics of various other nonlinear waves propagating through inhomogeneous media.
	
	\begin{keyword}{(3+1)D Nonlinear Evolution Equation; {\textcolor{black}{Truncated Painlev\'e Approach; auto-B\"acklund transformation; Soliton and Rogue Wave;}} Jacobi Elliptic Function; Variable Background} 
	\end{keyword}
\end{abstract}

\maketitle
\setstretch{0.950}	
	{\bf Highlights}
	\begin{itemize}
		\item Dynamics of localized waves on controllable backgrounds in (3+1)D nonlinear model is investigated.
		\item Analytic nonlinear wave solutions with arbitrary backgrounds are constructed using a simplified approach.
		\item Periodic and localized wave backgrounds are incorporated by Jacobi elliptic functions.
		\item The impact of different sn-, cn-, and dn-type backgrounds in the kink-soliton and rogue waves are demonstrated. 
		\item Possible directions to extend the present analysis to several nonlinear models are presented.
	\end{itemize}

\setstretch{1.150}	
\section{Introduction}
Studies on understanding the dynamics of nonlinear waves have been fascinating for a couple of centuries. More intense exploration has been made in the past few decades with the help of computational tools to deal with theoretical models describing such waves and technologies to their experimental demonstrations. To be precise mathematical modelling, solving the model, analyzing the associated dynamics and performing experiments to simulate theoretical results are the continuous steps to investigate the features and identify possible applications of such nonlinear waves. These nonlinear waves appear in diverse fields from  hydrodynamics to advanced gravitational/atmospheric science \cite{Scott-book}. There are plenty of mathematical models, both in scalar (one component) and multicomponent vector systems in one and higher-dimensional versions, with several results on nonlinear wave propagation and different types of interactions. To highlight the fact, we wish to mention that sine-Gordon (sG), Korteweg-de Vries (KdV), Kadomtsev-Petviashvili (KP), Boussinesq, and nonlinear Schr\"odinger (NLS) family of equations and their generalizations are a few of them which arise in various fields as governing models and attracted huge attention among the researchers working on these fields \cite{Whitham-book, Yang-book}. {\textcolor{black}{ Importantly, the nature and evolution of solitary waves or solitons, breathers, rogue waves, lumps, dromions, positons, complexitons, peakons, etc. and interactions among themselves as well as with other waves are investigated in recent years by adapting well-established, modified, and several novel methods, starting from the much efficient inverse scattering transform \cite{Ablo-book} to Hirota bilinear method \cite{Hirota-book, r01, r08, r06}, Lie group analysis \cite{ r02}, Darboux transformation \cite{ r03, r04}, B\"acklund transformation \cite{Back-book}, optimal Galerkin-homotopy asymptotic method  \cite{r07}, and various other expansion methods  \cite{r05} and references therein}}. {\textcolor{black}{Among these techniques, the B\"acklund transformations is one of the very successful approach to obtain localized wave solutions due to its ease and wide applicability. This approach is being successfully applied to several nonlinear models including single-component (scalar) and vector/coupled (multi-component) systems in one-dimension ((1+1)D) as well as in higher-dimensional ((2+1)D and (3+1)D) equations with different types of nonlinearities \cite{r1,r2,r3,r4,bt-mmas21,bt-tian21,bt-tian17}, to mention a few. Especially, the nature of solitons, breathers, lump and rogue waves in higher-order (1+1)D Boussinesq-Burgers equation \cite{r1}, (2+1)D dispersive long-wave system \cite{r2}, (2+1)D coupled Burgers model \cite{r4}, (3+1)D variable-coefficient Kadomtsev-Petviashvili-Burgers-type equation \cite{r3}, (3+1)D shallow-water waves \cite{bt-mmas21}, (3+1)D generalized Kadomtsev-Petviashvili equation \cite{bt-tian21} and inhomogeneous coupled nonlinear Schr\"odinger system \cite{bt-tian17} are studied using B\"acklund transformation method.}} 

{\textcolor{black}{Most of the above discussed nonlinear waves appear independently on some constant/uniform backgrounds or without any background and portray a vast collection of wave phenomena. However, the possibility of such nonlinear wave occurrence on varying backgrounds with controllable features becomes an open question and it started to attract interest among the researchers due to the physical significance and realization in different contexts. Particularly, the physical motivation to look for such nonlinear waves on  non-uniform/varying backgrounds starts from the situation of randomly varying surface or deep water waves to inhomogeneous plasma, layered magnetic materials, inhomogeneous optical media, and atomic condensate system \cite{Boris-book,PRR20,pnas21,FiPexp}.}} {\textcolor{black}{ As a result of this search, some localized nonlinear waves on varying backgrounds are investigated in recent times, which include the rogue waves on cnoidal, periodic, and solitary wave backgrounds in one-dimensional models such as focusing NLS model \cite{epjst,PHD20,RSPA,SiAM}, derivative NLS equation \cite{PRE21,  NLS21,PS20}, higher-order nonlinear Schr\"odinger equation \cite{ Chaos21,PLA21}, higher-order modified KdV equation \cite{CNSNS21}, modified KdV models \cite{mmas21,PS21}, Hirota equation \cite{EPJP21,NLD20}, Gerdjikov-Ivanov model \cite{CSF19}, sine-Gordon equation \cite{AML20, RSPA20}, Fokas model \cite{AML21}, and coupled  cubic-quintic NLS equation \cite{FiP20} as well as vector Chen-Lee-Liu NLS model \cite{PRR21}. Mostly, the method used in these studies is nothing but the Darboux transformation which requires Lax pair and involves complex mathematical calculations.}} 

\textcolor{black}{Though, there exist significant works on nonlinear waves with different backgrounds in one-dimensional models as mentioned above, such an attempt for higher-dimensional systems becomes scarce and yet to be explored. Further, considering the experimental observations in (1+1)D models, it is also required to unravel the features of non-uniform/varying backgrounds in higher-dimensional systems too. So, we wish to pursue a possible mechanism to generate localized nonlinear waves on controllable backgrounds in higher-dimensional models theoretically.} 
For this purpose, we consider the following (3+1)-dimensional model referred to as Kadomtsev-Petviashvili-Boussinesq (KPB) equation, which consists of three spatial dimensions in addition to the time coordinate \cite{amwt, jpy}:
	\begin{equation}
		u_{xxxy}+3(u_{x}u_{y})_{x}+u_{ty}+u_{tx}-u_{zz}+u_{tt}=0. \label{e1}
	\end{equation}
\textcolor{black}{The above KPB model (\ref{e1}) describes the propagation and interaction of localized structures in shallow and deep water \cite{amwt}. It is a generalized version of Kadomtsev-Petviashvili (KP) and Boussinesq equations with spatial and the spatio-temporal dispersions in addition to second-order temporal evolution. Also, it is well known that the KP and Boussinesq models respectively explain the dynamics of shallow and deep water, and hence the present KP model coupled with Boussinesq effect describes the dynamics around an interface of both deep and shallow water waves in (3+1)D which are represented as right and left moving waves \cite{jpy}. Further, this KPB model (\ref{e1}) can provide an insight into physical modeling due to additional Boussinesq term as the zero-mass assumption is not required \cite{sun}.} 
In recent years, the KPB model (\ref{e1}) attracted several researchers to study its different perspectives including the integrability nature, symmetry solutions, conservation laws, and certain nonlinear wave solutions using bilinear/B\"acklund transformation. {Though the integrability in higher dimensional models is a challenging task, the present KPB model (\ref{e1}) is an integrable generalization with mixed KP and Boussinesq effects. Note that the dispersion relations of the KP and KPB models are quite different. However, the considered model ensures the Hirota bilinear form and so various physically interesting nonlinear wave structures}. 
{\textcolor{black}{Looking at the literature on the KPB model  (\ref{e1}), we can find that the solutions of one- and two-solitons using the simplified Hirota method \cite{amwt}, traveling waves using bilinear B\"acklund transformations \cite{jpy}, high-order breathers and rogue waves \cite{sun} and higher-order rogue waves with generalized polynomials \cite{ wli} and lump and interaction waves \cite{mplb} through Hirota's bilinear method are reported. Also, the Painlev\'e integrability analysis \cite{lkaur}, localized wave solutions using bilinear form \cite{lkaur2} and the symmetry reductions along with conserved quantities are obtained \cite{ ldm}.}
\textcolor{black}{Recently, Manafian has reported multi-rogue wave solutions using generalized polynomials and reduced bilinear form along with kink-soliton solutions using multiple exp-function method for the present KPB model \cite{mma21}.} 
For further detailed analysis and discussion, one can refer to these reports and references therein.

\textcolor{black}{Note that the solutions reported in the above-mentioned works \cite{amwt, sun, lkaur, mplb, jpy, lkaur2, wli, ldm,mma21}, especially the soliton and rogue wave solutions reported in \cite{amwt,sun,mma21}, are on constant (zero/non-zero) background, while the solutions we intend to discuss here is on variable backgrounds. Considering the limited works for localized waves on variable backgrounds in higher-dimensional models, especially no such report is available for the present KPB model (\ref{e1}), we proceed to pursue our interest to construct localized nonlinear wave solutions on spatio-temporally controllable backgrounds and study their dynamics in detail.} \textcolor{black}{For this purpose, we implement an auto-B\"acklund transformation 
as explained in Sec. \ref{sec2}. We construct  analytical solutions for kink-soliton and rational rogue wave in Sec. \ref{sec3} and explain the notion of controllable wave structures for different combinations of backgrounds. Section \ref{sec4} is devoted for providing a categorical discussions on some important results obtained in the work and the final section \ref{sec5} contains formal conclusions with significance of the results.} 

\section{Methodology to Construct Nonlinear Waves Solutions on Background}\label{sec2}
In this section, we deduce an auto-B\"acklund transformation which shall generate physically interesting solution structures explaining the nature of various localized nonlinear waves by using the truncated Painlev\'e approach. \textcolor{black}{Here we wish to mention that most of the previously known analyses for nonlinear waves on varying backgrounds (given in the Introduction) involve complex mathematical calculations like Darboux transformation and also require that the models should admit Lax pair. In pursuit of an easier and more adaptable tool to construct solutions with controllable backgrounds, we have succeeded when attempting with this truncated Painlev\'e approach and extracted auto-B\"acklund transformation. Thus we propose this `simple mathematical tool' to obtain  analytical solutions for constructing various types of nonlinear waves and for investigating their evolutionary characteristics under a wider possible backgrounds. Also, we believe that this procedure can be straightforwardly adapted for different other nonlinear models of potential interest.} We know that the KPB equation \eqref{e1} is Painlev\'e integrable and possesses the Laurent series truncation \cite{lkaur} defined as $u(x,y,z,t) = u_0 (x,y,z,t) + \dfrac{u_1(x,y,z,t)}{\phi(x,y,z,t)}$.
{\textcolor{black}{On substituting this into KPB equation \eqref{e1}, we get a polynomial equation with terms at different powers of $\phi(x,y,z,t)$ as $\sum_{j=1}^{6} \Omega_{j}\phi^{j-6}= 0,$ thereby 
	solving the equations arising at each power of $\phi$, we obtain the explicit auto-B\"acklund transformation. However, in general, exactly solving this set of equations is another tedious task because of the higher-order nonlinearity. To overcome this difficulty and to prelude the effectiveness of the methodology, we consider a special case of truncation in the Laurent series by replacing $u_0 (x,y,z,t)$ with  $u_0 (z,t)$ which has one temporal and spatial effect. This special truncation will enhance the ease of computation and provides an entry to arbitrary functions in the solution as described in Theorem $2.1$.}}
	
	\textcolor{black}{Now, considering the motivation to study the importance of spatio-temporal background $u_0$ and the above description, we look for a solution of (3+1)D KPB model (\ref{e1}) in the following form:}
	\begin{equation}
		u(x,y,z,t) = u_0 (z,t) + \dfrac{u_1(x,y,z,t)}{\phi(x,y,z,t)}, \label{eq14}
	\end{equation} 
	where $u_0$ is any function depending on one space $z$ and time $t$ coordinates, while it is independent from other two spatial dimensions $x$ and $y$ (restricted for mathematical simplicity as mentioned above). \textcolor{black}{Further, one can consider the general function $u_0 (x,y,z,t)$ with all spatial dimensions and shall proceed computing the required solutions, which we leave as a future assignment.} 
	
	\begin{Theorem}
		Let $\phi(x,y,z,t)$ is a solution of the two coupled linear and nonlinear equations\vspace{-0.280cm}
		\bes\bea 
		&&\phi_{xtt} - \phi_{xzz} + \phi_{xyt} + \phi_{xxt} + \phi_{xxxxy}  = 0, \label{eq105} \\
		&&\phi_t (\phi_x + \phi_y ) - 3 \phi_{xy} \phi_{xx} + 3 \phi_x \phi_{xxy} + \phi_y \phi_{xxx} + \phi_t ^2 - \phi_z^2 = 0,\label{eq106}\eea 
		{then} \vspace{-0.5cm}
		\bea
		&&u(x,y,z,t) = f(t-z) + g(t+z) + \dfrac{2 \phi _x }{\phi },\qquad\qquad \label{eq16}
		\eea \ees 
		is the solution of KPB equation \eqref{e1}, where the sum of the first two terms involving arbitrary functions $f(t-z)$ and $g(t+z)$ is nothing but the general solution of wave equation $u_{0,tt} - u_{0,zz}=0$.
	\end{Theorem}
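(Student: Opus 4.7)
The plan is to substitute the truncated Painlev\'e ansatz \eqref{eq14}, namely $u(x,y,z,t)=u_0(z,t)+u_1(x,y,z,t)/\phi(x,y,z,t)$, into the KPB equation \eqref{e1}, organize the resulting expression as a Laurent polynomial $\sum_{j=1}^{6}\Omega_j\phi^{j-6}=0$ in $\phi$, and demand that each $\Omega_j$ vanish separately. Since every derivative of $u_0$ with respect to $x$ or $y$ is zero, the only way $u_0$ can enter the equation is through the pure-$u_0$ terms $u_{0,tt}-u_{0,zz}$; this immediately decouples the background and forces $u_0$ to solve the one-dimensional wave equation, whose d'Alembert form is $u_0=f(t-z)+g(t+z)$.

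First I would expand every derivative of $u_1/\phi$ appearing in \eqref{e1}, i.e.\ $(u_1/\phi)_{xxxy}$, $((u_1/\phi)_x(u_1/\phi)_y)_x$, $(u_1/\phi)_{ty}$, $(u_1/\phi)_{tx}$, $(u_1/\phi)_{zz}$, and $(u_1/\phi)_{tt}$. The most singular terms, produced by $(u_1/\phi)_{xxxy}$ and by the quadratic nonlinearity, sit at order $\phi^{-5}$. Performing the dominant balance at this order yields the standard KP-type universal relation $u_1=2\phi_x$; this is the leading Laurent coefficient and fixes the B\"acklund normalization. I would verify it by computing $\Omega_1$ explicitly and checking that $u_1=2\phi_x$ makes it vanish identically.

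Next, with $u_1=2\phi_x$ imposed, I would substitute back and collect the coefficients $\Omega_2,\Omega_3,\Omega_4,\Omega_5$ at orders $\phi^{-4},\phi^{-3},\phi^{-2},\phi^{-1}$. The orders $\phi^{-4}$ and $\phi^{-3}$ should reduce, after simplification using $u_1=2\phi_x$, to the nonlinear constraint \eqref{eq106}; the order $\phi^{-2}$ should collapse to the linear constraint \eqref{eq105}. Finally, at order $\phi^{-1}$ only pure-$u_0$ contributions survive (all cross terms with $u_0$ vanish because $u_{0,x}=u_{0,y}=0$), producing $u_{0,tt}-u_{0,zz}=0$; its general solution is $u_0=f(t-z)+g(t+z)$. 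Assembling all pieces yields the claimed solution formula \eqref{eq16}.

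The main obstacle is bookkeeping rather than conceptual: correctly expanding the mixed fourth-order derivative $(u_1/\phi)_{xxxy}$ and the cubic-in-$\phi$ nonlinearity $((u_1/\phi)_x(u_1/\phi)_y)_x$ produces on the order of several dozen terms per power of $\phi$, and one must regroup them so that the two auxiliary PDEs in Theorem~2.1 emerge in precisely the stated form. A subtle but helpful simplification is that many potential cross-terms between $u_0$ and $u_1/\phi$ automatically drop out because $u_0$ has no $x$- or $y$-dependence, so the only $u_0$ contribution to \eqref{e1} comes from $u_{0,tt}-u_{0,zz}$; without this restriction (which was introduced as a ``special case of truncation'' in the methodology), the $u_0$-sector and the $\phi$-sector would remain coupled and the clean wave-equation background could not be extracted.
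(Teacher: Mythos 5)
Your overall strategy is the same as the paper's: substitute the truncated ansatz, expand in powers of $\phi$, and make the coefficients $\Omega_1,\dots,\Omega_6$ vanish, with the dominant balance at $\phi^{-5}$ fixing $u_1=2\phi_x$. However, your assignment of constraints to orders is wrong, and the error conceals the one genuinely non-mechanical step of the proof. In the actual expansion, the order-$\phi^{-4}$ coefficient $\Omega_2$ vanishes \emph{identically} once $u_1=2\phi_x$ is imposed (it does not ``reduce to'' the nonlinear constraint); the nonlinear constraint \eqref{eq106} comes from order $\phi^{-3}$ alone; the linear constraint \eqref{eq105} comes from order $\phi^{-1}$, being nothing but the linear part of the KPB operator acting on $u_1=2\phi_x$; and the pure-background equation $u_{0,tt}-u_{0,zz}=0$ sits at the non-singular order $\phi^{0}$, not at $\phi^{-1}$ as you claim.

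The serious gap is at order $\phi^{-2}$: you assert that it ``collapses to the linear constraint \eqref{eq105},'' but the coefficient $\Omega_4$ in \eqref{eq70} is a quadratic-in-derivatives expression that coincides with neither auxiliary PDE. Since the theorem hypothesizes only the two equations \eqref{eq105} and \eqref{eq106}, you must \emph{prove} that $\Omega_4=0$ is a consequence of them; otherwise ``demand that each $\Omega_j$ vanish separately'' imposes a third, unlisted constraint on $\phi$, and the theorem as stated would not be established. The paper closes this by the compatibility identity $\Omega_4=\Omega_{3,x}+\phi_x\int\Omega_5\,dx$, so that $\Omega_4$ vanishes automatically whenever $\Omega_3$ and $\Omega_5$ do. Carried out as written, your plan would stall exactly here: at order $\phi^{-2}$ you would obtain an expression that is not \eqref{eq105}, with no argument for why it vanishes.
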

	\begin{proof}
		On substituting \eqref{eq14} into KPB equation \eqref{e1}, we obtain the following expression:
		\bes \bea \Omega _{1} \phi ^{-5} + \Omega_{2} \phi ^{-4} + \Omega _{3}\phi ^{-3} + \Omega_{4}\phi ^{-2} + \Omega _{5} \phi ^{-1} + \Omega_{6} =0, \eea
		where \vspace{-0.50cm}
		\bea 
		\Omega_{1} &=& -12 u_1 ^2 \phi _x ^2 \phi _y + 24 u_1 \phi _x ^3 \phi _y ,\label{eq67}\\
		\Omega_{2} &=& 15 u_1 u_{1,x} \phi _x \phi _y  + 9 u_1 u_{1,y} \phi _x ^2 - 18 u_{1,x} \phi _x ^2 \phi _y - 6 u_{1,y}\phi _x ^ 3 + 3 u_1 ^2 \phi _x \phi _{xy}- 18 u_1 \phi _x ^2 \phi _{xy} \nonumber\\ &&  + 3 u_1 ^2  \phi _{xx} \phi _y- 18 u_1 \phi _x \phi _y \phi _{xx} ,   \label{eq68}\\
		\Omega _{3} &=& 2 u_1 \phi _t ^ 2 - 2 u_1 \phi _z^2 + 2 u_1 \phi _y \phi _t - 3 u_{1,x}^2 \phi _y + 2 u_1 \phi_x \phi _t - 9 u_{1,x} u_{1,y} \phi _x  - 3 u_1 u_{1,xy} \phi _x \nonumber\\ 
		&&+ 6 u_{1,xy} \phi _x^2 - 3 u_1 u_{1,x} \phi _{xy} + 12 u_{1,x} \phi _x \phi _{xy} - 3 u_1 u_{1,xx} \phi _y + 6 u_{1,xx} \phi _x \phi _y - 3 u_1 u_{1,y}\phi _{xx} \nonumber\\ 
		&&+ 6 u_{1,x}\phi _{xx} \phi _y + 6 u_{1,y}\phi _x \phi _{xx} + 6 u_1 \phi _{xy} \phi _{xx} + 6 u_1 \phi _x \phi _{xxy} + 2 u_1 \phi _{xxx} \phi _y, \label{eq69}\\
		\Omega _{4} &=& -2 u_{1,t} \phi _t - u_1 \phi _{tt} + 2 u_{1,z} \phi _z + u_1 \phi _{zz} - u_{1,y} \phi _t - u_{1,t} \phi_y - u_1 \phi _{yt} - u_{1,x} \phi _t - u_{1,t} \phi _x \nonumber\\ 
		&&- u_1 \phi _{xt} + 3 u_{1,x}u_{1,xy} + 3 u_{1,y} u_{1,xx} - 3 u_{1,xx} \phi _{xy} - 3 u_{1,xy} \phi _{xx} - 3 u_{1,xxy}\phi _x - 3 u_{1,x} \phi _{xxy} \nonumber\\ 
		&&- u_{1,xxx}\phi _y - u_{1,y} \phi _{xxx} - u_{1} \phi _{xxxy},  \label{eq70}\\
		\Omega_{5} &=& u_{1,tt}- u_{1,zz} + u_{1,yt} + u_{1,xt} + u_{1,xxxy},  \label{eq71}\\
		\Omega_{6} &=& u_{0,tt}- u_{0,zz}.  \label{eq73}
		\eea \ees 
		\textcolor{black}{To obtain the required solution, we need to solve the above system of equations recursively arising at each power of $\phi(x,y,z,t)$. On solving for $\Omega_{1} =0$ from Eq. (\ref{eq67}), we get $u_1= 2 \phi _x $ and this straightforwardly satisfies $\Omega_{2}=0$ from (\ref{eq68}). The requirements $\Omega_{3} = 0$ and $\Omega_{5}=0$, after substituting $u_1= 2 \phi _x $, lead to the two differential equations (\ref{eq106}) and (\ref{eq105}), respectively. Thus by making use of (\ref{eq105}) and (\ref{eq106}) we can write Eq. (\ref{eq70}) as $\Omega _{4} = \Omega_{3,x} + \phi _x \int \Omega_{5} dx \Rightarrow \mbox{(\ref{eq106})}_x + \phi _x \int \mbox{(\ref{eq106})} dx \Rightarrow 0$ and vanishes automatically. 
		Further, when we look for $\Omega_{6}=0$ from (\ref{eq73}), we can understand that $u_0$ can take any form in such a way that it satisfies the wave equation $u_{0,tt}- u_{0,zz}=0$. Here and in the forthcoming analysis, we consider a general solution $u_0(z,t)= f(t-z) + g(t+z)$ of the wave equation $u_{0,tt}- u_{0,zz}=0$ as the driving background and will study its implications with different localized nonlinear wave structure.}
		
		Hence, we can conclude here that when $\phi(x,y,z,t)$ satisfies Eqs. \eqref{eq105} and \eqref{eq106}, then Eq. (\ref{eq16}) holds and it is nothing but the required solution of the considered (3+1)D KPB equation (\ref{e1}). 
	\end{proof}
	
	\section{Nonlinear Waves with Controllable Background \& their Dynamics}\label{sec3}
	By using the generalized solution structure given by Eq. (\ref{eq16}), one can construct various nonlinear wave solutions on different type of physically significant backgrounds of interest. To validate such special provisions, in the following part, we have demonstrated a convenient way of obtaining one/first-order (i) solitary wave (soliton) and (ii) rational (rogue) wave solutions on interesting periodic as well as localized backgrounds incorporated through Jacobi elliptic functions and appropriate initial seed solutions. \textcolor{black}{Here, we wish to highlight that the higher-order soliton and rogue waves along with different other types of nonlinear wave solutions can be constructed by proceeding with this straightforward algorithm and their dynamics due to the influence of arbitrary driving backgrounds can be explored.} 
	\subsection{Kink Soliton on Periodic \& Localized Backgrounds}
	Taking the initial seed solution $\phi (x,y,z,t)$ in the form of a first order exponential function representing a solitary wave as below.
	\begin{equation}
		\phi (x,y,z,t) = 1 + \epsilon_1 \exp{(\alpha_1 x + \beta _1 y + \gamma_1 z + \delta _1 t  + \epsilon _0)}.  \label{eq91}
	\end{equation}
	Upon the substitution of (\ref{eq91}) into equations \eqref{eq105} and \eqref{eq106}, we get $\alpha_1 ^3 \beta _1 - \gamma _1 ^2 + \delta _1 ( \alpha_1 + \beta _1 + \delta _1) =0$, while solving it for $\beta_1$, we obtain $\beta _1 = {[\gamma_1 ^2 - ( \alpha_1 + \delta_1) \delta_1]}/{(\alpha_1^3 + \delta_1)}$. Thus the explicit form of $\phi (x,y,z,t)$ becomes
	\begin{equation}
		\phi (x,y,z,t) = 1 + \epsilon_1 \exp\left({\alpha_1 x + \frac{\gamma_1 ^2  - \delta_1 (\alpha_1 + \delta_1 )}{\alpha_1 ^3 + \delta_1 } y + \gamma_1 z+  \delta_1 t + \epsilon_0}\right).  \label{eq94}
	\end{equation}
	We wish to mention that for a special choice $\epsilon_1 = \epsilon, \epsilon_0 = 0, \alpha_1 =k , \gamma_1 =m, \delta_1 = -m \lambda_{10} $, one can get $\phi (x,y,z,t)$ and resulting solution as given in \cite{jpy}, obtained using bilinear B\"acklund transformation, but without the background functions that are going to play crucial role. 
	Hence from the general form of $\phi (x,y,z,t)$ from (\ref{eq94}), the required kink-soliton solution of KPB equation \eqref{e1} on spatio-temporal background is obtained as 
	\bea
	u(x,y,z,t) = \dfrac{2 \alpha_1 \epsilon_1 \exp{\left (\alpha_1 x + \frac{(\gamma_1 ^2 - ( \alpha_1 + \delta_1) \delta_1)y }{\alpha_1^3 + \delta_1}+\gamma_1 z + \delta_1 t+\epsilon_0\right )}}{1+\epsilon_1 \exp{\left (\alpha_1 x + \frac{(\gamma_1 ^2 - ( \alpha_1 + \delta_1) \delta_1)y }{\alpha_1^3 + \delta_1}+\gamma_1 z + \delta_1 t+\epsilon_0\right )}}+ f(t-z) + g (t+z),\label{eq96}
	\eea
	where $\alpha_1$, $\gamma_1$, $\delta_1$, $\epsilon_0$, and $\epsilon_1$ are arbitrary real parameters.  
	\begin{figure}[h]
		\centering
		\includegraphics[width=0.325\linewidth]{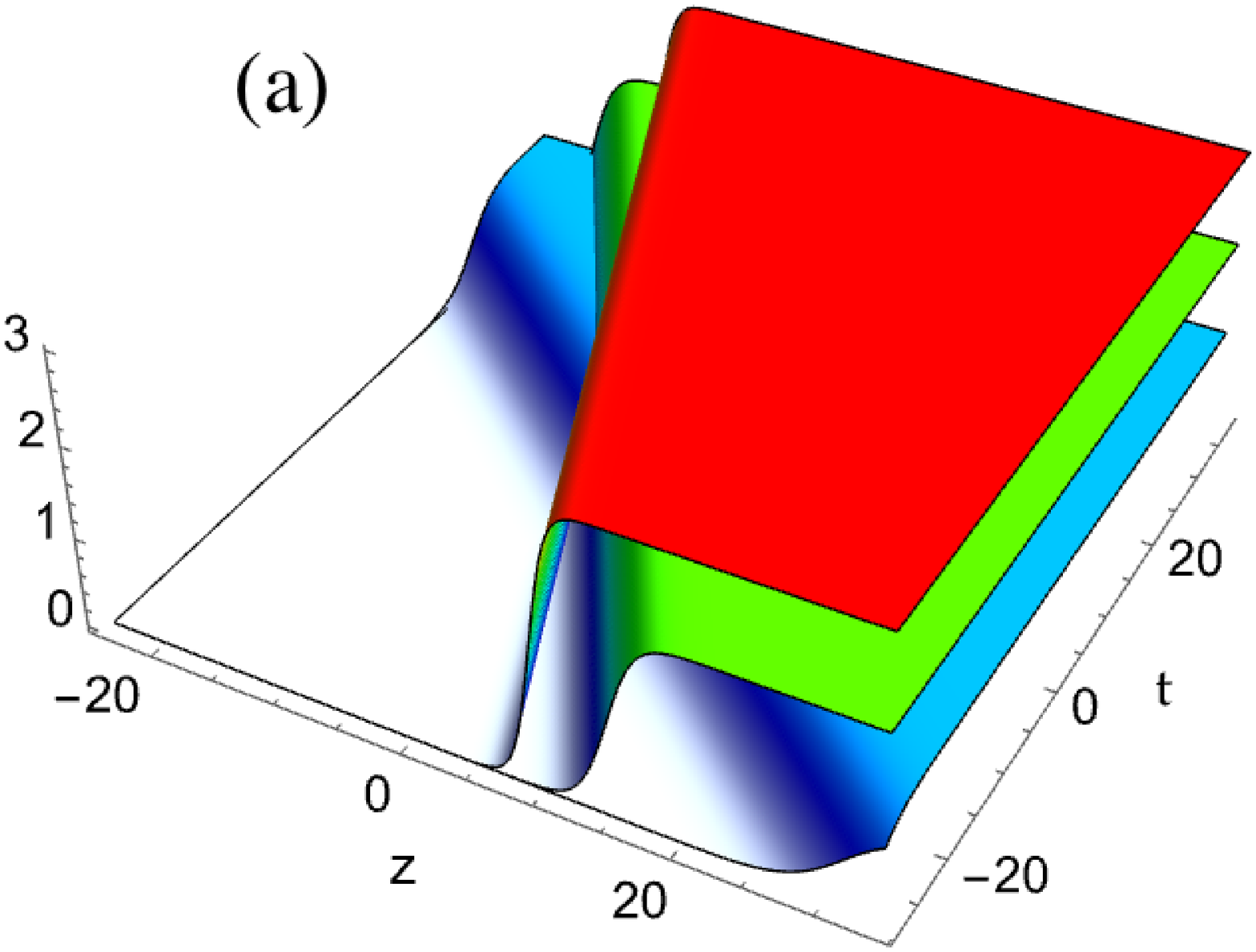}~~\includegraphics[width=0.325\linewidth]{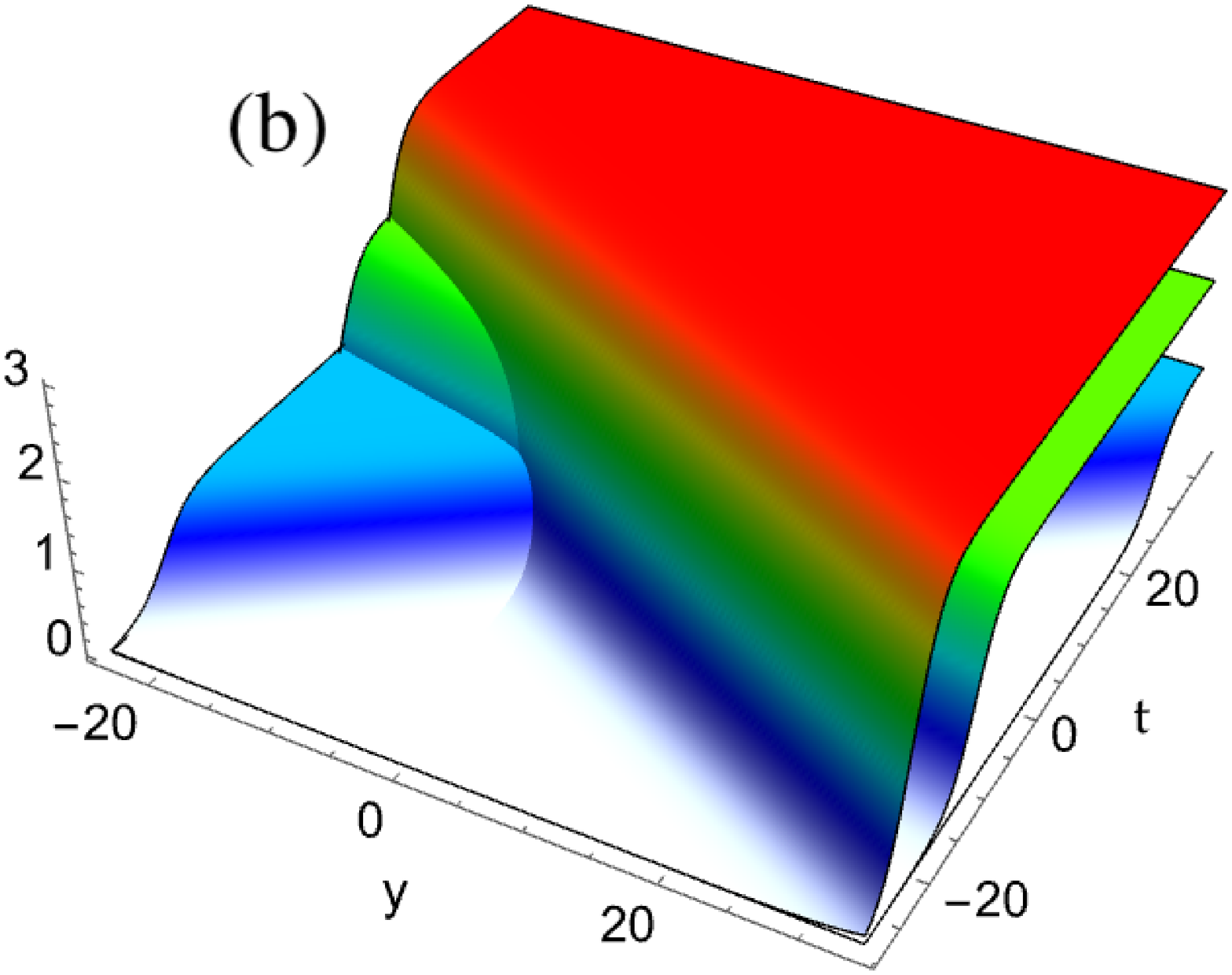}~~\\\includegraphics[width=0.325\linewidth]{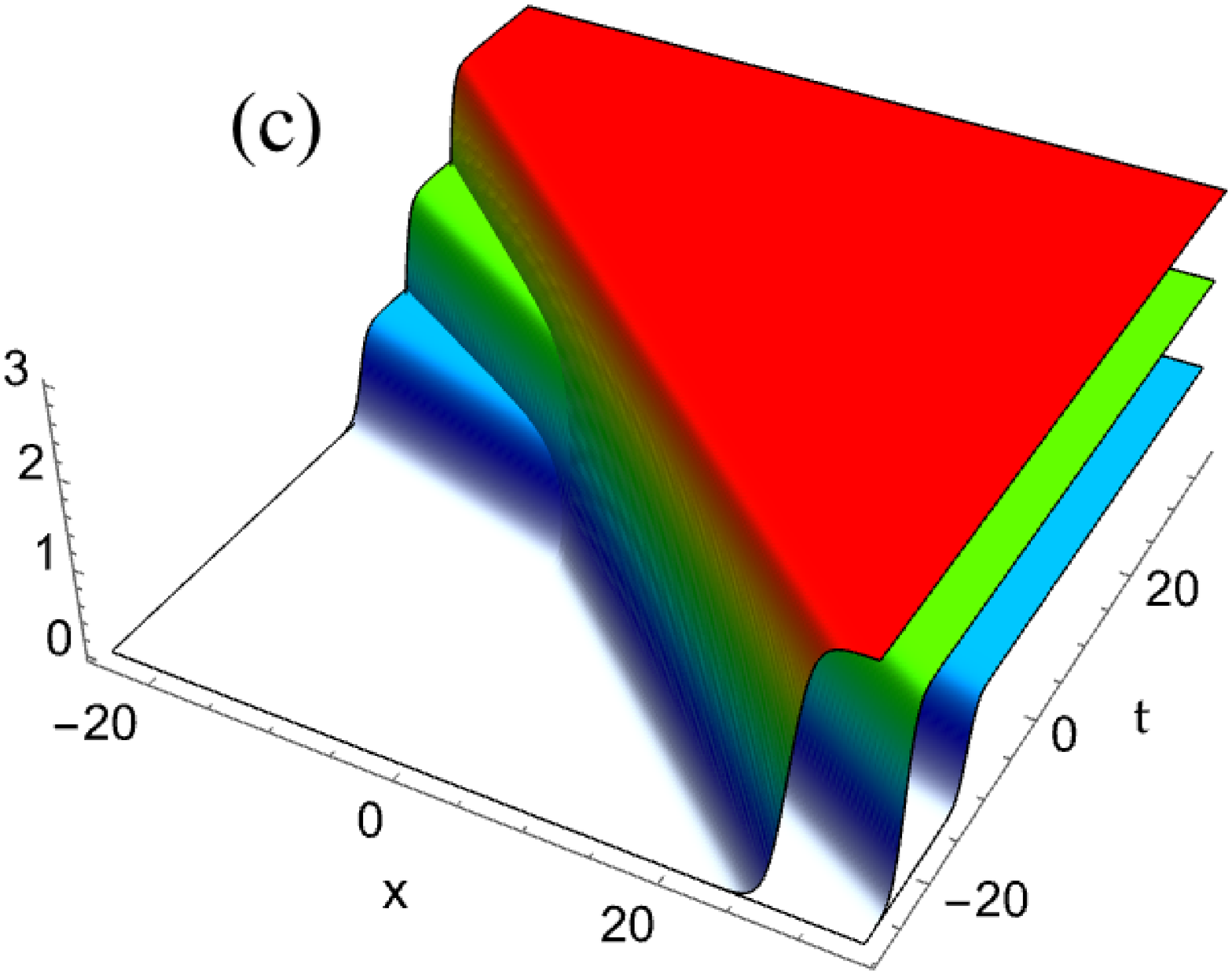}~~\includegraphics[width=0.325\linewidth]{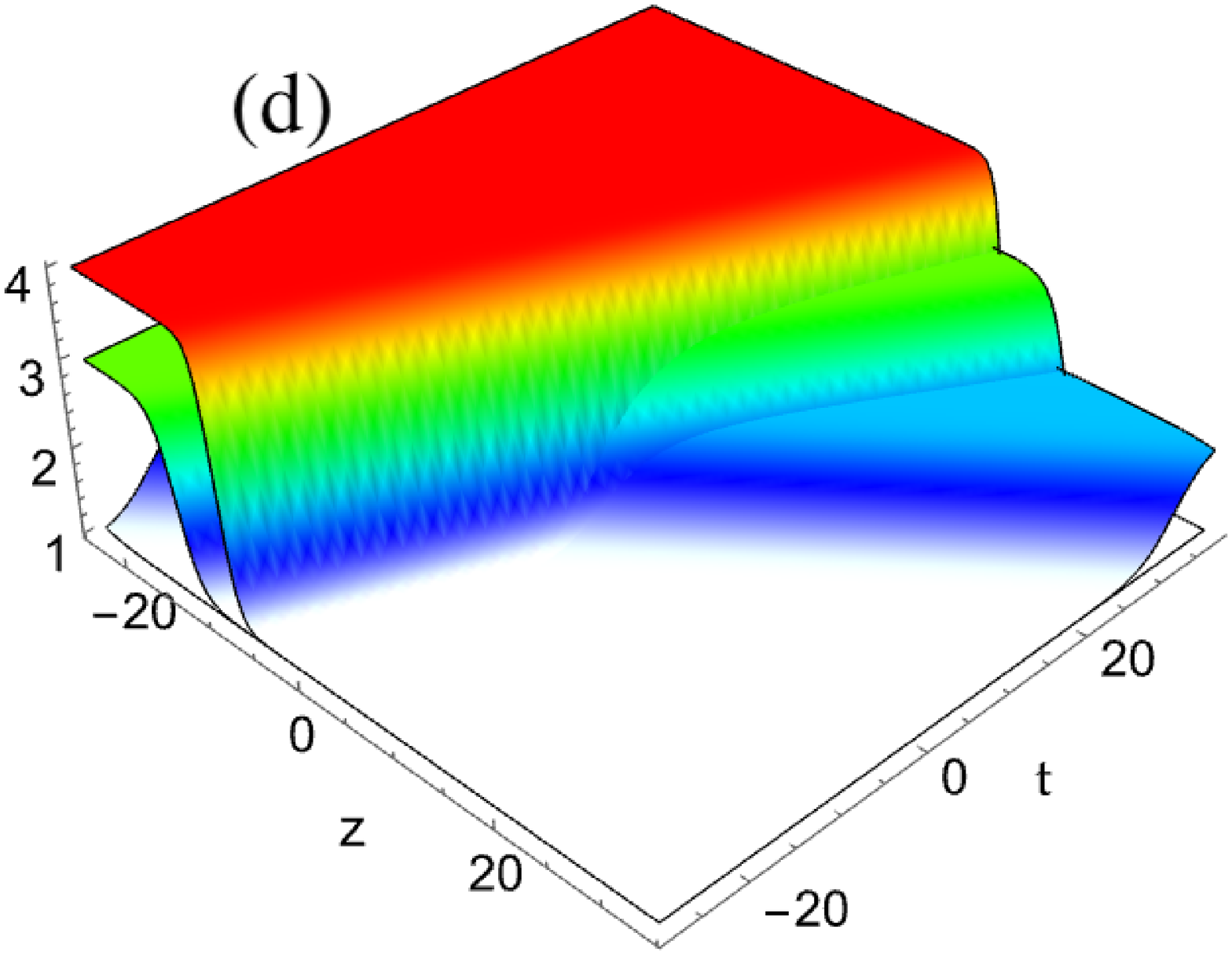}
		\caption{Dynamics of first-order (one) kink soliton $u$ given by Eq. (\ref{eq96}) without background along (a) $z-t$ at $x=y=0.4$, (b) $y-t$ at $x=z=0.4$, (c) $x-t$  at $z=y=0.4$ and (d) on a non-zero (constant) background along $z-t$ plane with $\delta_1=1.5$, $\epsilon_0=0.5$, and  $\epsilon_1=0.5$. For blue: $\alpha_1=\gamma_1=0.5$; green: $\alpha_1=\gamma_1=1.0$; and red: $\alpha_1=\gamma_1=1.5$.}
		\label{kink-zero}
	\end{figure}
	
	It is important to stress again that the sum of arbitrary functions $f(t-z)$ and $g (t+z)$ 
	can be used to control and modulate the dynamics of obtained nonlinear wave solution/structure. Here we can divide the above solution (\ref{eq96}) into two parts (i) without any background for $f=g=0$ and (ii) with controllable background when $f,g\neq 0$. In one hand, solution (\ref{eq96}) is nothing but a kink-type soliton (solitary wave) having a stable/smooth step-like profile without any background for vanishing functions $f=g=0$. Also, this kink soliton structure is localized in space and stably propagating in time, which can be referred to singly-localized or spatially-localized nonlinear wave. The arbitrary constant parameters $\alpha_1$, $\gamma_1$, $\delta_1$, $\epsilon_0$, and $\epsilon_1$ help to characterize some of its physical properties. Especially, the amplitude of the kink soliton is governed by $2 \alpha_1 \epsilon_1$, while its velocity becomes $-\delta_1/{\alpha_1}$, $-\delta_1({\alpha_1^3 + \delta_1})/{(\gamma_1^2-( \alpha_1 + \delta_1) \delta_1)}$, and  $-\delta_1/{\gamma_1}$ along different spatial dimensions $x$, $y$ and $z$, respectively. For an easy understanding, we have demonstrated the nature of kink soliton (\ref{eq96}) possessing different amplitude, velocity and phase in Fig. \ref{kink-zero} for various  choices of parameters as given in the caption. The obtained solution (\ref{eq96}) is only the first-order (single) kink soliton, while one can construct multi-kink solitons also in the standard way from appropriate initial/seed solution for $\phi(x,y,z,t)$. As the main objective of this work is to unearth the possibility of arbitrary backgrounds and their importance in the localized nonlinear waves, for which we restrict ourselves with one-kink soliton and explore them below. Further, the significance of backgrounds in multi-kink solitons and their collision dynamics are also of future interest. 
	
	As already mentioned in the introduction, the possibility of constructing localized nonlinear wave solution on arbitrary backgrounds is more challenging and not much explored for higher-dimensional models. In order to showcase the influence of arbitrary backgrounds in our KPB model (\ref{e1}) one can choose any arbitrary form for $f(t-z)$ and $g(t+z)$, subject to satisfying the wave equation $u_{0,tt}- u_{0,zz}=0$. Here we adapt the form of $f$ and $g$ in terms of physically important Jacobi elliptic functions as below.
	\bes\bea 
	f(t-z)=a_1 \mbox{sn}(a_2(t-z),k)+a_3 \mbox{cn}(a_4(t-z),m)+a_5 \mbox{dn}(a_6(t-z),n),\\
	g(t+z)=b_1 \mbox{sn}(b_2(t+z),k)+b_3 \mbox{cn}(b_4(t+z),m)+b_5 \mbox{dn}(b_6(t+z),n),
	\eea \label{jacobi}\ees
	where $k$, $m$ and $n$ are the elliptic modulus parameters ($0\leq k,m,n \leq 1)$, while $a_j$ and $b_j$ ($j=1,2,3\dots,6$) are arbitrary real constants. Particularly, $a_1$, $a_3$, $a_5$, $b_1$, $b_3$ and $b_5$ indicate the amplitude/magnitude of the modulation, while $a_2$, $a_4$, $a_6$, $b_2$, $b_4$ and $b_6$ represent periodicity/localization of the background. As it is known that these Jacobi functions trigger periodic, localized, and combined (periodic+localized) variations for appropriately chosen modulus parameters, here we discuss how they affect the dynamics of kink soliton. 
	\begin{figure}[h]
		\centering
		\includegraphics[width=0.325\linewidth]{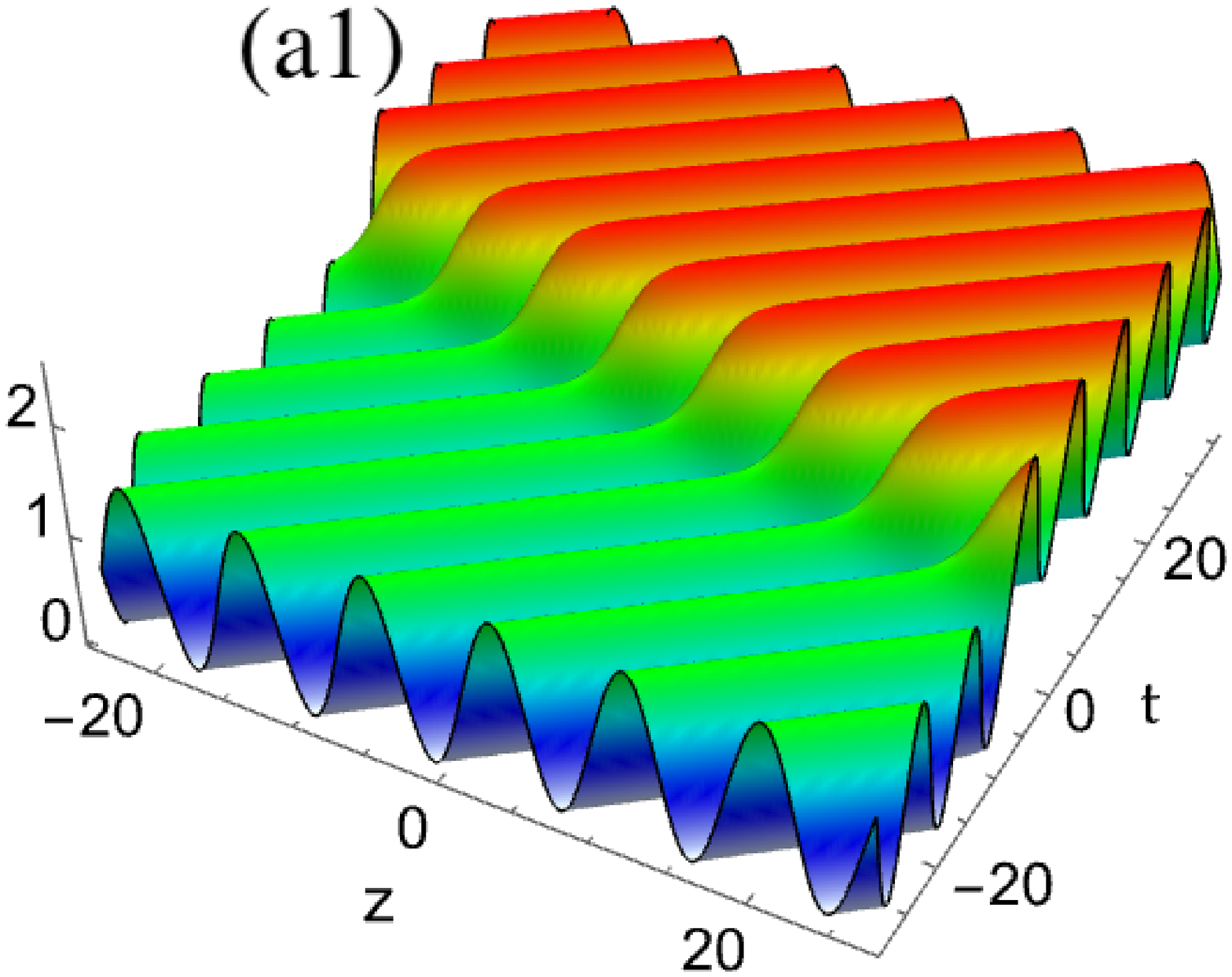}~
		\includegraphics[width=0.325\linewidth]{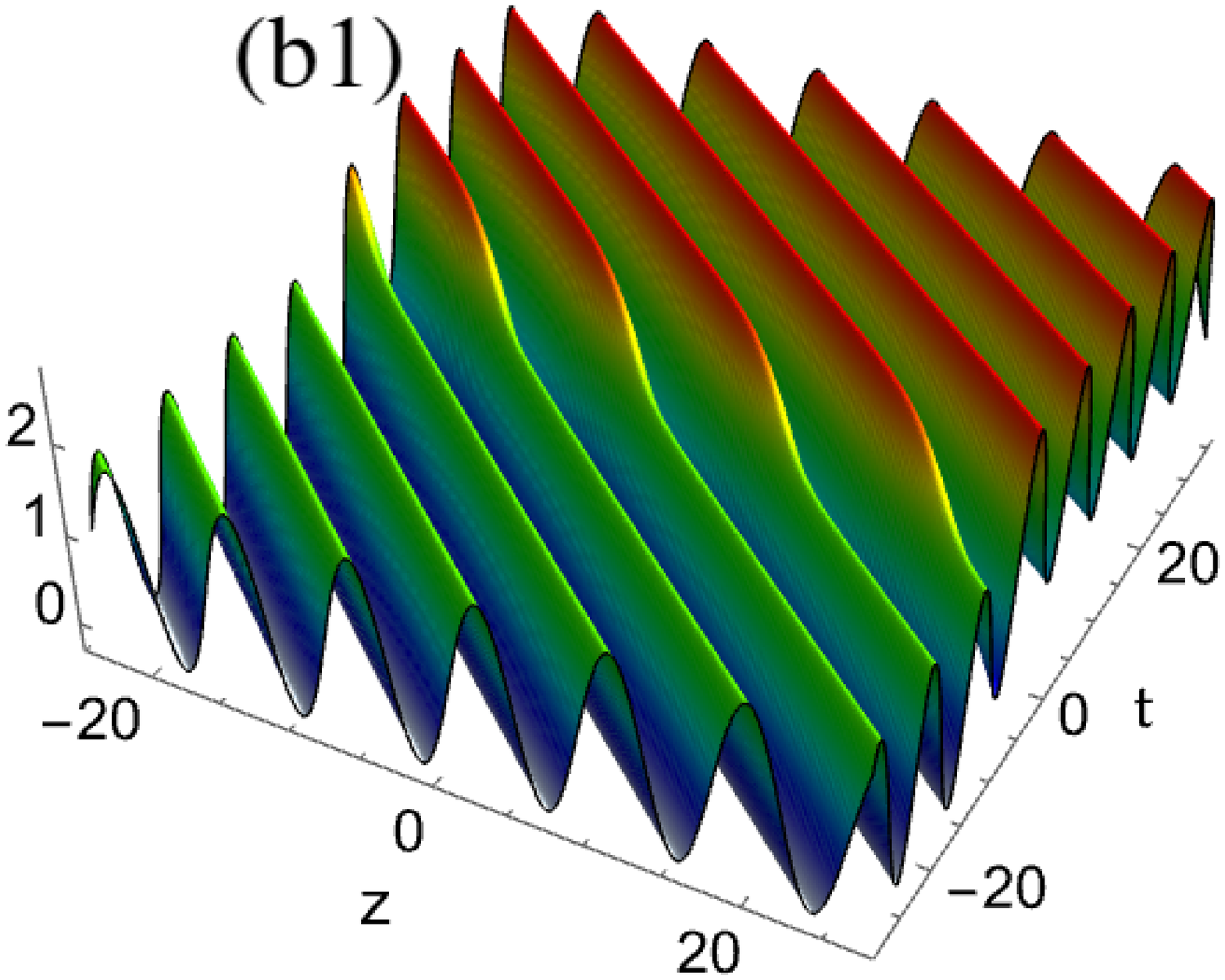}~\includegraphics[width=0.325\linewidth]{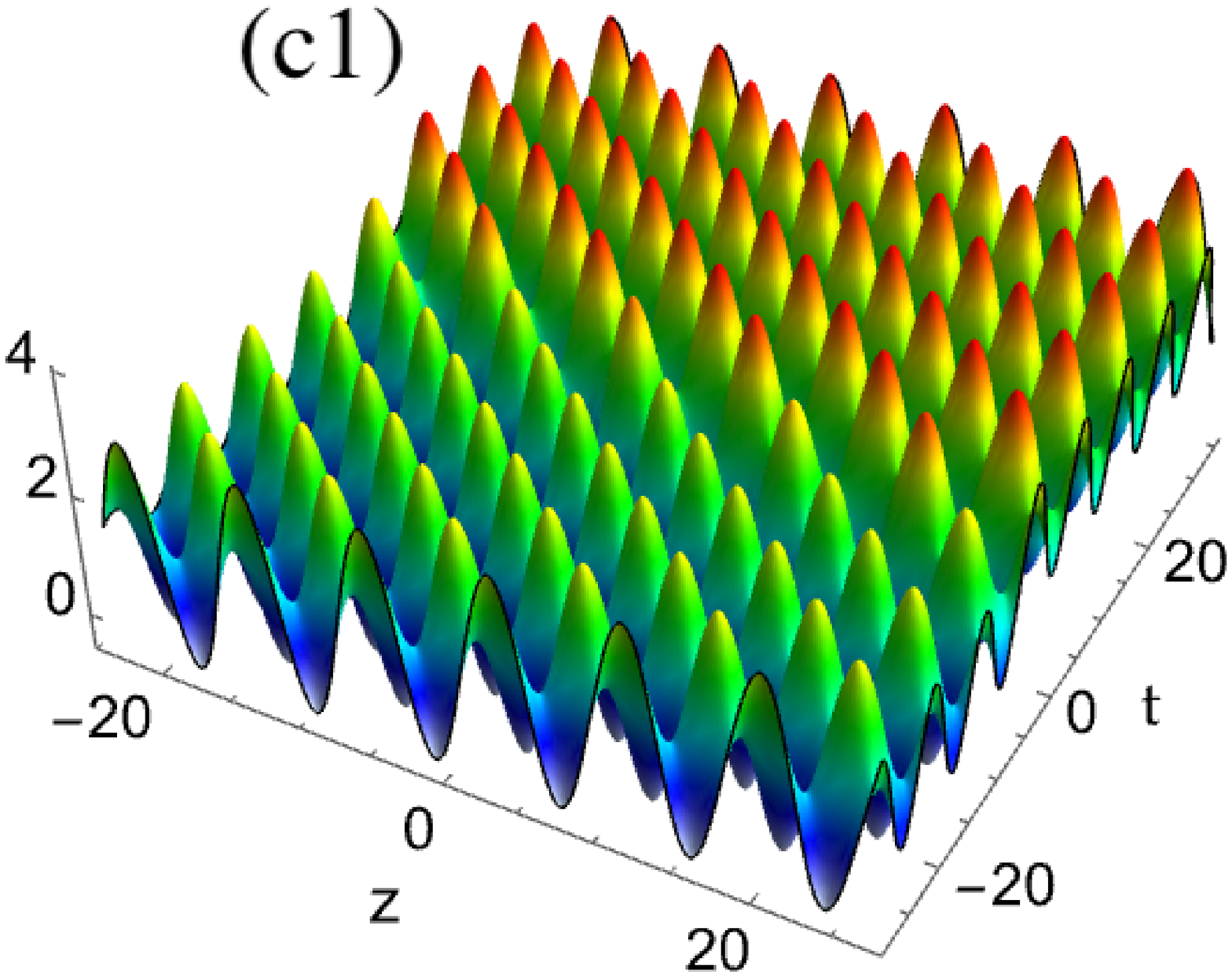}~\\
		\includegraphics[width=0.325\linewidth]{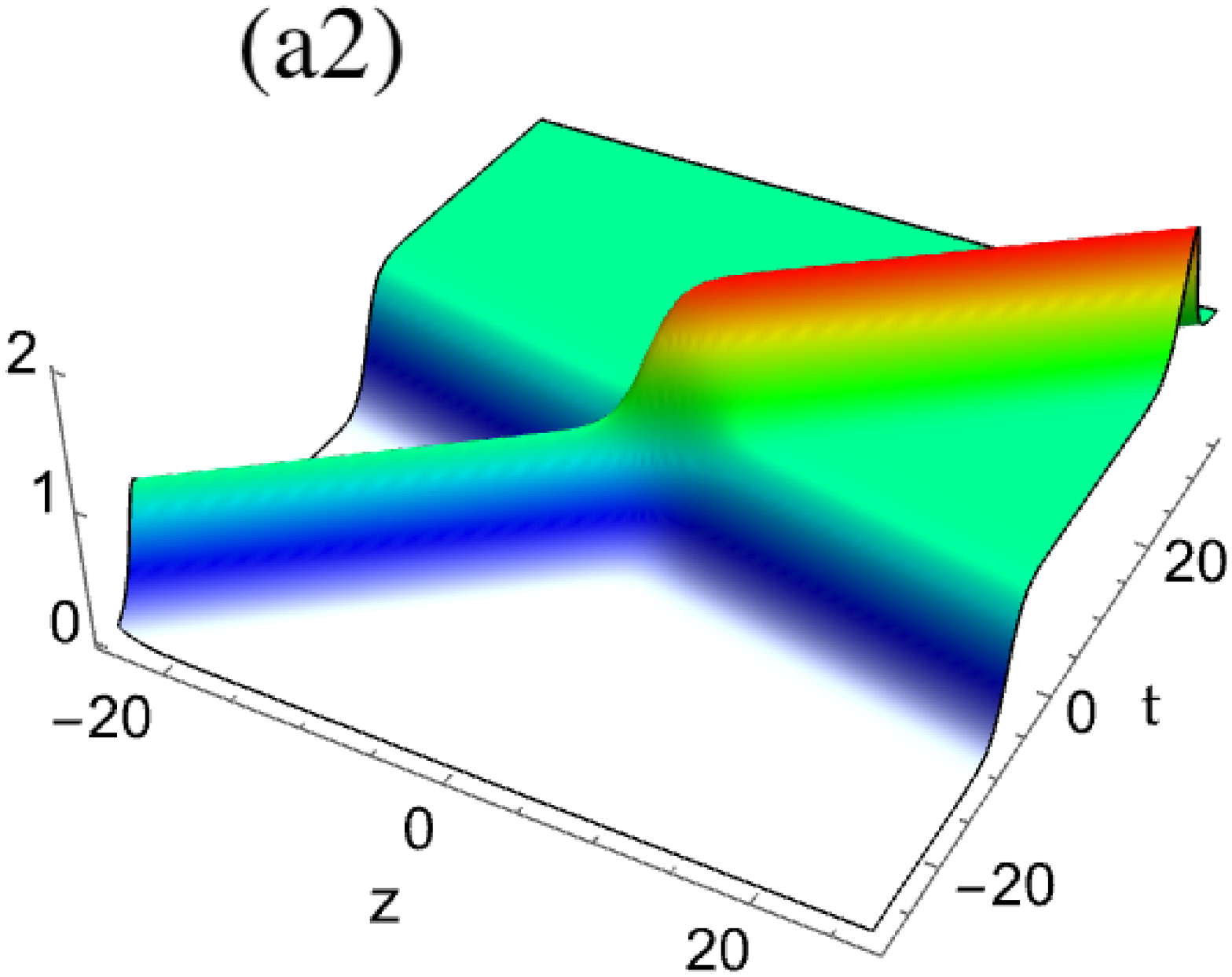}~
		\includegraphics[width=0.325\linewidth]{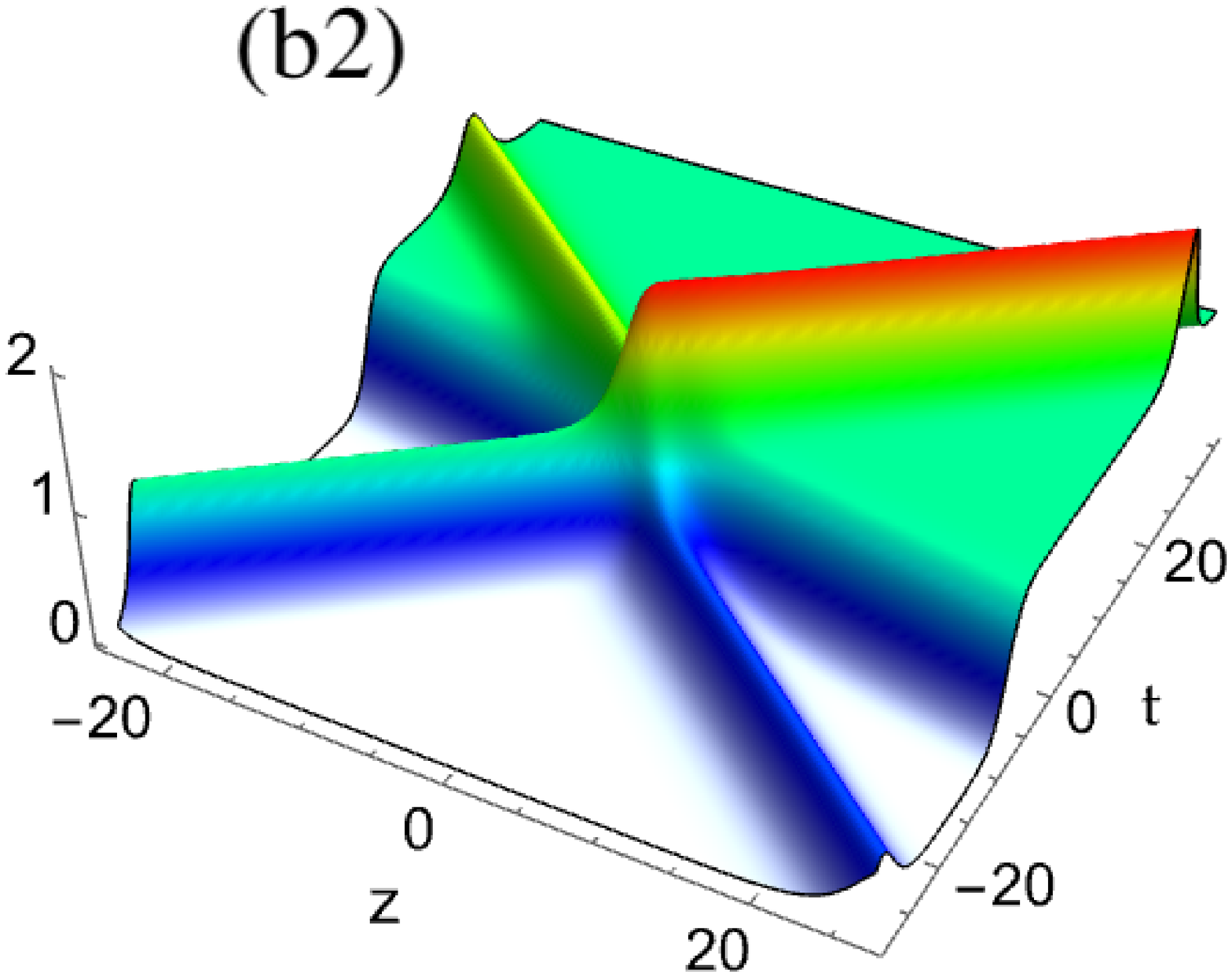}~\includegraphics[width=0.325\linewidth]{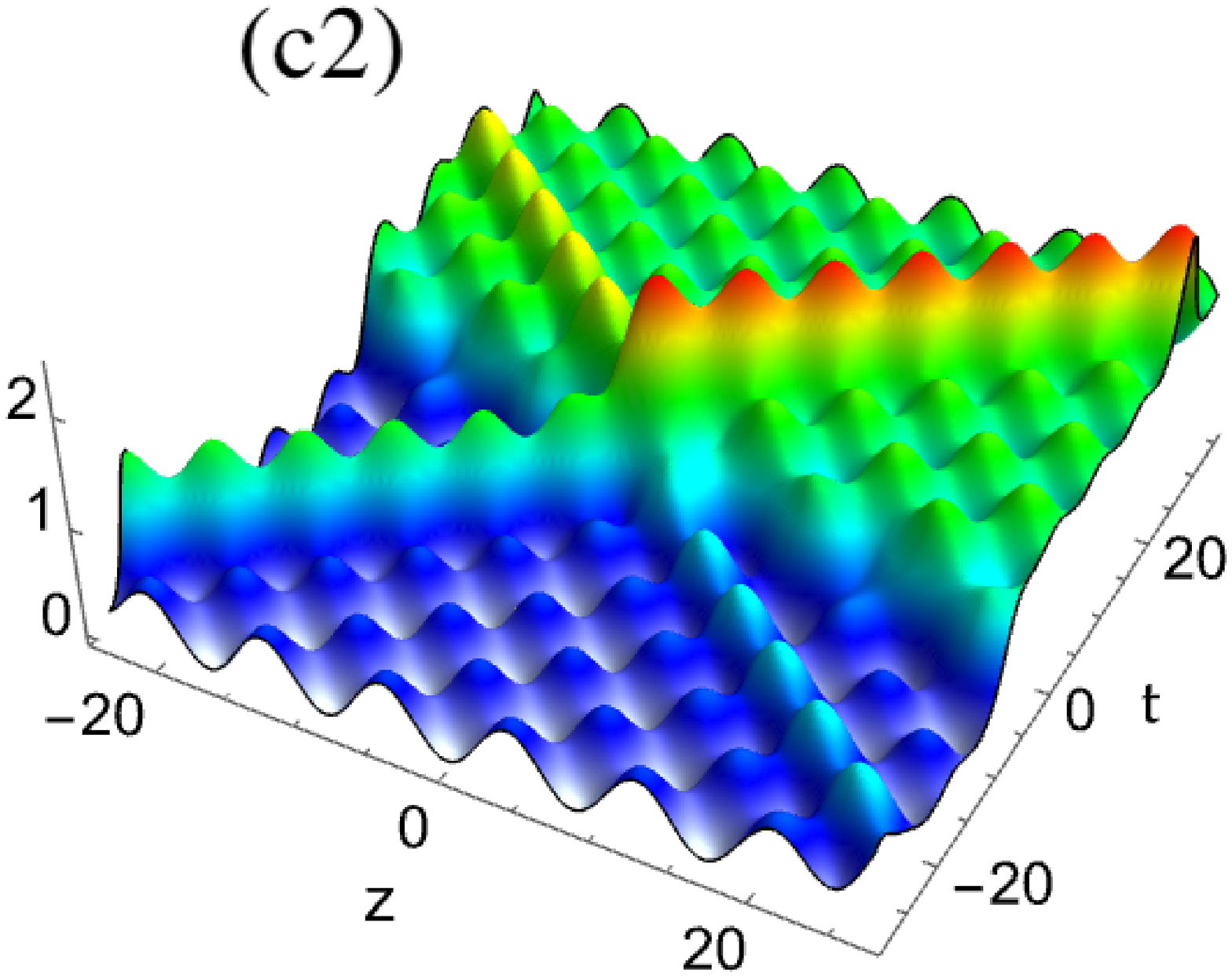}\\
		\includegraphics[width=0.325\linewidth]{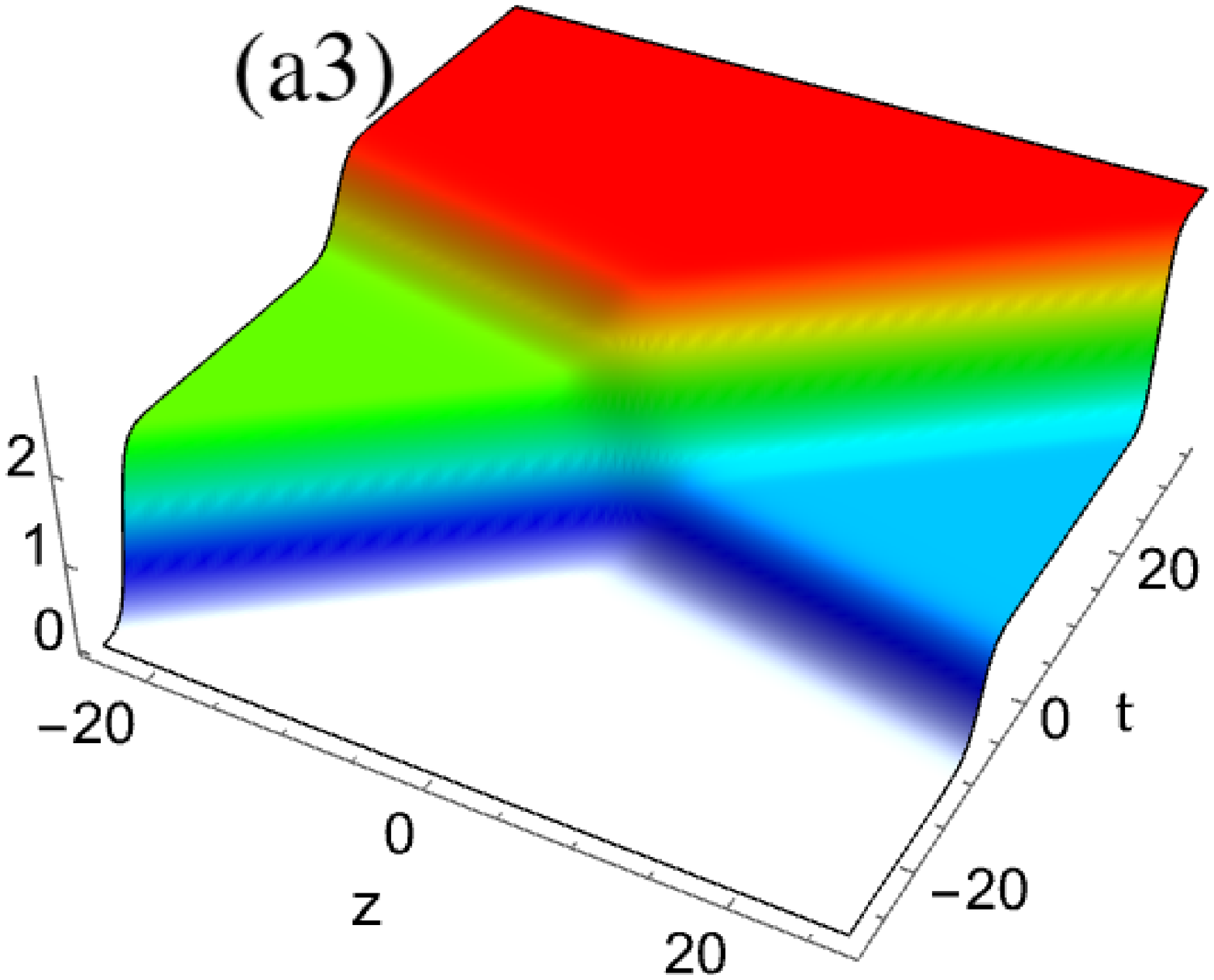}~
		\includegraphics[width=0.325\linewidth]{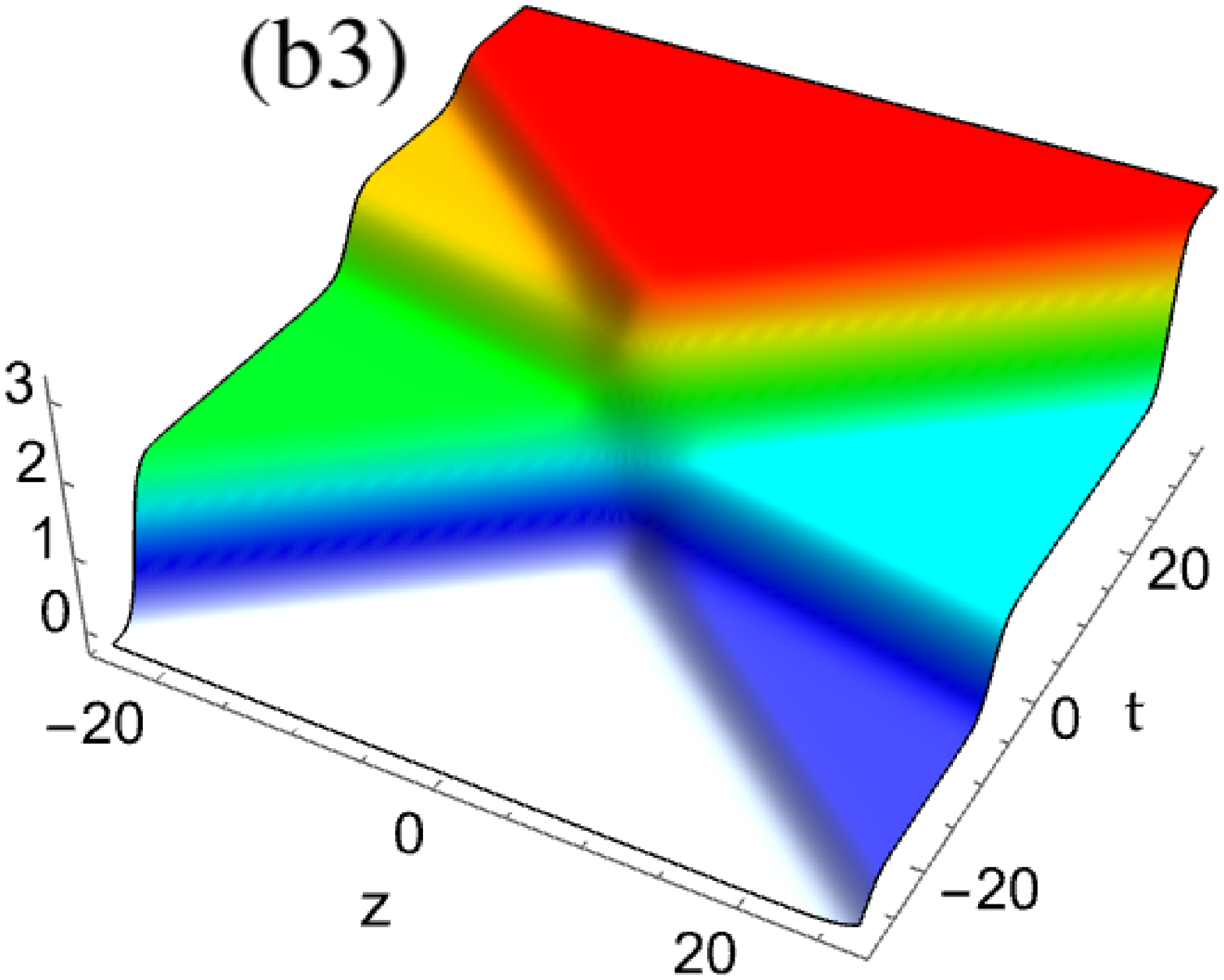}~\includegraphics[width=0.325\linewidth]{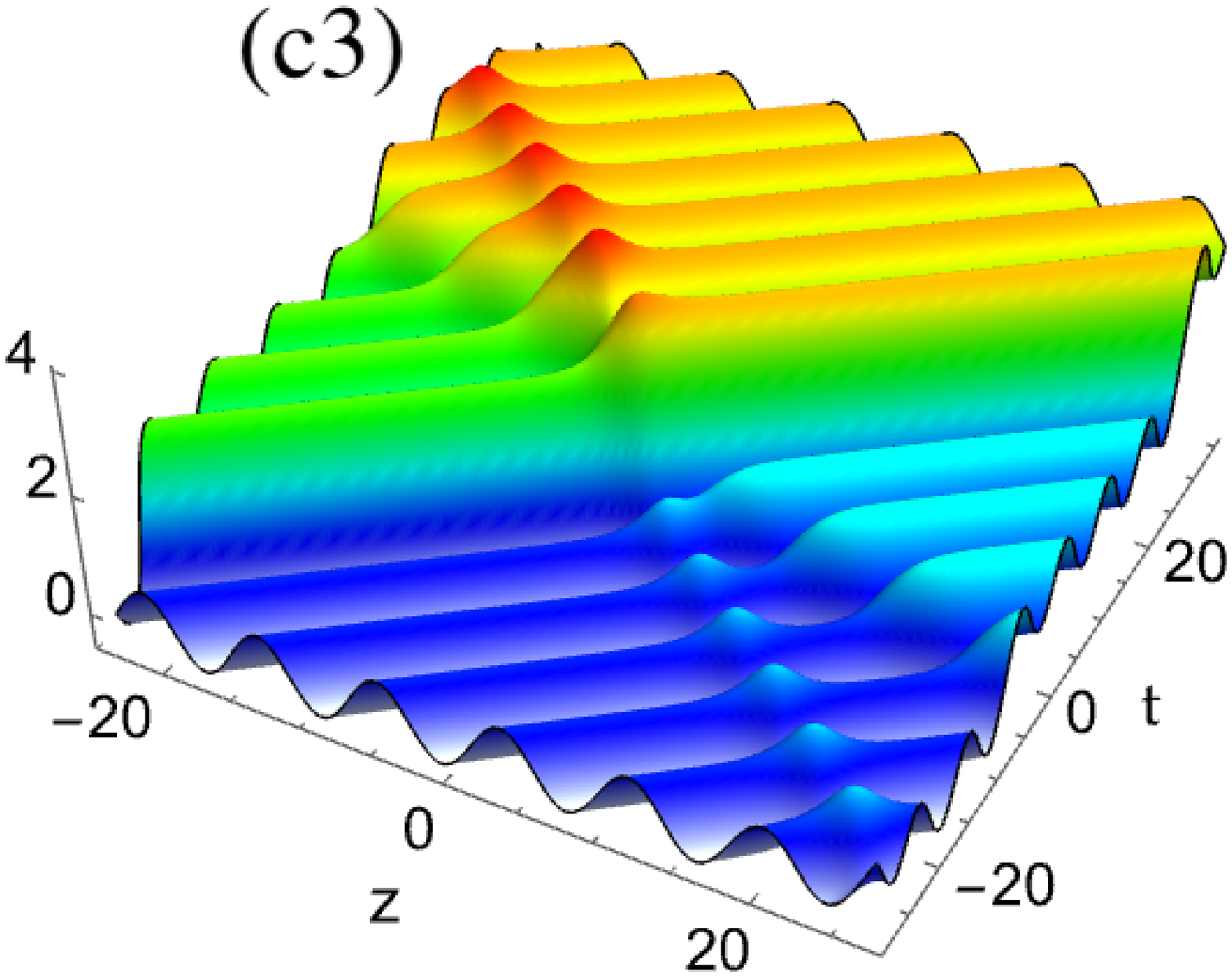}
		\caption{Impact of various backgrounds (\ref{jacobi}) on the dynamics of kink soliton $u$ given by (\ref{eq96}) along $z-t$ plane at $x,y=0.4$ with $\alpha_1=0.5$, $\gamma_1=0.25$, $\delta_1=1.0$, $\epsilon_1=0.5$, and  $\epsilon_0=0.25$. The background parameters are considered as (a1) periodic $a_1,a_2=0.75$, (b1) periodic $b_3=1.0$, $b_4=0.75$, and (c1) double-periodic $a_1,a_2,b_4=0.75$, $b_3=1.0$ with $k,m=0$; (a2) bell-kink-type solitons collision $a_3=1.0$, $a_4=0.75$, (b2) double-bell-kink solitons collision $a_3=1.0, b_3=0.5$, $a_4,b_4=0.75$, and (c2) bell-type localized and double-periodic $a_1,b_1=0.15,a_3=1.0,b_3=0.5$, $a_2,a_4,b_2,b_4=0.75$ with $k=0,m=1$; (a3) two kink-solitons collision $a_1=1.0$, $a_2=0.75$, (b3) three kink-solitons collision $a_1,b_1=0.5$, $a_2,b_2=0.75$, and (c3) double-bell-kink solitons collision with periodic excitation $a_1=1.0$, $a_3,b_3=0.5$,  $a_2,a_4,b_4=0.75$ with $k=1,m=0$, while all the other values are fixed as zero.}
		\label{kink-peri}
	\end{figure}
	
	First, we investigate kink soliton (\ref{eq96}) on single-periodic and double-periodic backgrounds with $a_1\neq 0$ or $b_3\neq 0$ and both $a_1,b_3\neq 0$, respectively, for elliptic modulus parameters $k,m=0$ with other $a_j$ and $b_j$ parameters as zero. We have depicted the resulting dynamical pattern as graphical outcome in Fig. \ref{kink-peri}(top panel). We can observe that the kink soliton still maintain its single-step structure, but it exhibits periodic and double-periodic amplitude excitation instead of smooth/flat propagation. The double-periodic nature results from two different spatio-temporal functions $f(t-z)$ and $g(t+z)$. Additionally, we can observe the quasi-double-periodic evolution of kink soliton by taking two different periods of oscillation $a_2\neq b_4$ with $a_1,b_3\neq 0$. One can also identify similar periodic and double-periodic kink soliton with $a_3\neq 0$ or $b_1\neq 0$ and both $a_3,b_1\neq 0$, respectively, with $k,m=0$ and $a_1,a_5,b_3,b_5=0$. Next, we consider localized wave-type background with elliptic moduli $m,n=1$ with either one or more among $a_3,b_3,a_5$ and $b_5$ as non-zero, which provides one or more localized bell-type (sech) solitary wave background. Thus the resultant wave dynamics here will be the superposition of initial kink soliton and background soliton (interacting sech solitons) as shown in the middle panel of Figs. \ref{kink-peri}(a2) and (b2). Interestingly, the background can also be chosen with double-localized and double-periodic nature which reveals much interesting dynamics as given in Fig. \ref{kink-peri}(c2) with $k=0$ and $m=1$ (or $n=1$). Moving further, when we choose kink-type background with $k=1$ and $a_1\neq 0$ or/and $b_1\neq 0$, the dynamics becomes more interesting as this case results into a collision of two or three kink-solitons as shown in Fig. \ref{kink-peri}(bottom panel). It is obvious that these solitons exhibit phase-shift after collisions and it is clearly visible from the demonstrations. Similar to the double-periodic case, one can also opt for double-localized wave backgrounds like sech+sech, sech+tanh, and tanh+tanh, along with additional (single or double) periodic wave too, and to elucidate this feature a few options are demonstrated in Figs. \ref{kink-peri}(c2) and (c3). Moreover, any other possible combinations/forms of background can be adopted and thereby the associated evolution can be explored extensively.
	
	\subsection{Rational (Rogue) Wave on Varying Periodic \& Localized Backgrounds}
	In the previous part, we have studied the dynamics of kink soliton on various backgrounds. Proceeding further, in this part, we analyse another interesting nonlinear wave entity, which is nothing but a doubly-localized rational (rogue) wave and its implications due different spatio-temporally varying backgrounds. For this purpose, we take another type of very simple polynomial form of initial seed solution $\phi (x,y,z,t)$ defined as 
	\begin{equation}
		\phi= \alpha_0+\alpha_1 x + \alpha_2 y + \alpha_3 z + \alpha_4 t + \beta_1 x y + \beta_2 xz + \beta_3 xt + \beta_4 yz + \beta_5 yt+ \beta_6 zt + \gamma_1 x^2 + \gamma_2 y^2 + \gamma_3 z^2 + \gamma_4 t^2. \label{eq97} 
	\end{equation}
	Substituting the above $\phi$ into (linear and nonlinear) differential equations \eqref{eq105} and \eqref{eq106}, we get a set of relations among the coefficients $\alpha_j$, $\beta_j$, and $\gamma_j$ that provide the exact form of $\phi$ as\vspace{-0.35cm}
	\bes\bea 
	\phi(x,y,z,t) = \alpha_0 +\dfrac{\Lambda_1}{\beta_3^2 \delta_2 \delta_3}, \label{peq1} \vspace{-0.35cm}
	\eea
	where
	\bea 
	&\Lambda_1 =& \sqrt{6} (t-y) \beta_1 \beta_3^3 \gamma_1 \delta_2+(-\alpha_3 \beta_3^{3/2}(\beta_3 t+\delta_1 y)+(\beta_3^2 (\alpha_1+\beta_1 y+\beta_3 t)x+ \beta_3^2\gamma_1 x^2 \nonumber\\ 
	&&+\beta_3^2( \alpha_3 z+ \gamma_4 t^2)-(\alpha_1 \beta_3^2+ \beta_3^3 t+4  \sqrt{\beta_3} \gamma_1 \delta_2 \gamma_4 z -2 \beta_3 \gamma_4 \delta_1 t)y+y^2(\gamma_4 \delta_1^2-\beta_3^2(\beta_1+\gamma_1)))\delta_2 \nonumber\\ 
	&&-z\sqrt{\beta_3}((x-y)\beta_3^2+2 (\beta_1 y+\beta_3 t)\gamma_4)\delta_2^2+z^2\beta_3 \gamma_4 \delta_2^3)\delta_3+\sqrt{6}\beta_3 \delta_2 \delta_3^2 y,\\
	&&\delta_1=\beta_1+2\gamma_1, \quad  \delta_2=\sqrt{\beta_1+2\gamma_1+\beta_3}, \quad \delta_3=\sqrt{-\beta_1 \beta_3  \gamma_1  \delta_2^2}. \label{peq2}
	\eea 
	Thus, from the above explicit form of $\phi(x,y,z,t)$, we can deduce the required rational wave solution of KPB equation \eqref{e1} from (\ref{eq16}) as below.
	\bea
	u(x,y,z,t) = \dfrac{2 \phi_x}{\phi} + f(t-z)  + g (t+z)\Rightarrow \dfrac{2 \left( \alpha_0 \delta_2 \delta_3\beta_3^2+\Lambda_{1,x}\right)}{\alpha_0 \delta_2 \delta_3\beta_3^2+\Lambda_{1}} + f(t-z)  + g (t+z). \label{eq910}
	\eea \ees 
	The above solution (\ref{eq910}) is a simplest form of rational soliton or rogue wave in both spatio-temporally (doubly) localized excitation and amplitude peaks with extended tails on either side. Here we have different arbitrary parameters such as $\alpha_0$, $\alpha_1$, $\alpha_3$, $\beta_1$, $\beta_3$, $\gamma_1$, and $\gamma_4$ to manipulate its identities like symmetric or asymmetric type rogue wave structure with controllable peak amplitude, width, and tail length/orientation. For a proper understanding on the significance of these parameters, we have shown the rogue wave patterns appearing along $z-t$ plane for different choices in Fig. \ref{rogue-const} where Fig. \ref{rogue-const}(a) and (b) represent a symmetric rogue wave, while \ref{rogue-const}(c) and (d) denote the rogue waves with induced asymmetry in its amplitude and width. Further, one can also observe such rogue waves in other two ($x-t$ and $y-t$) dimensions too.
	\begin{figure}[h]
		\centering
		\includegraphics[width=0.325\linewidth]{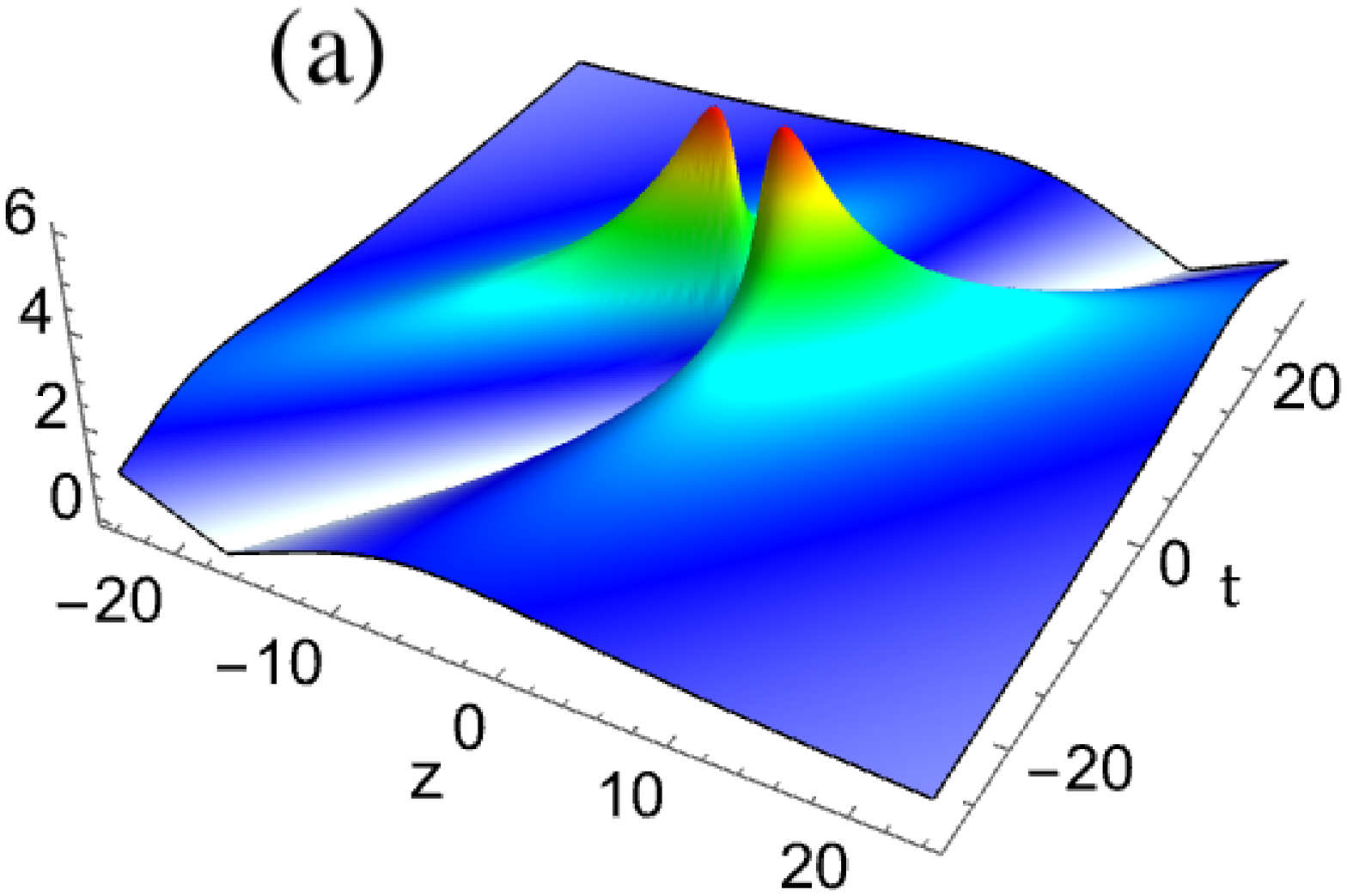}~~\includegraphics[width=0.260\linewidth]{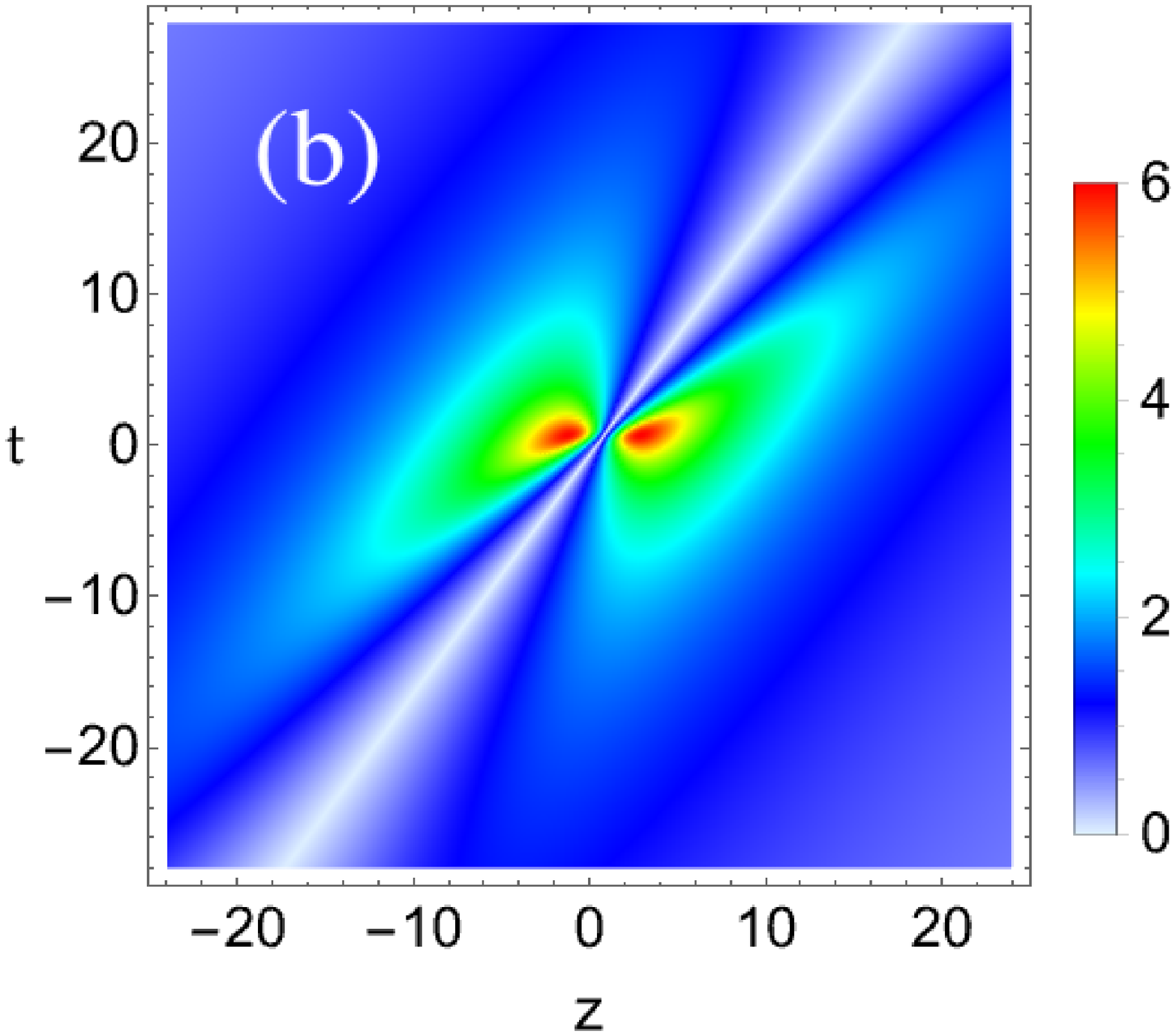}~~\\\includegraphics[width=0.325\linewidth]{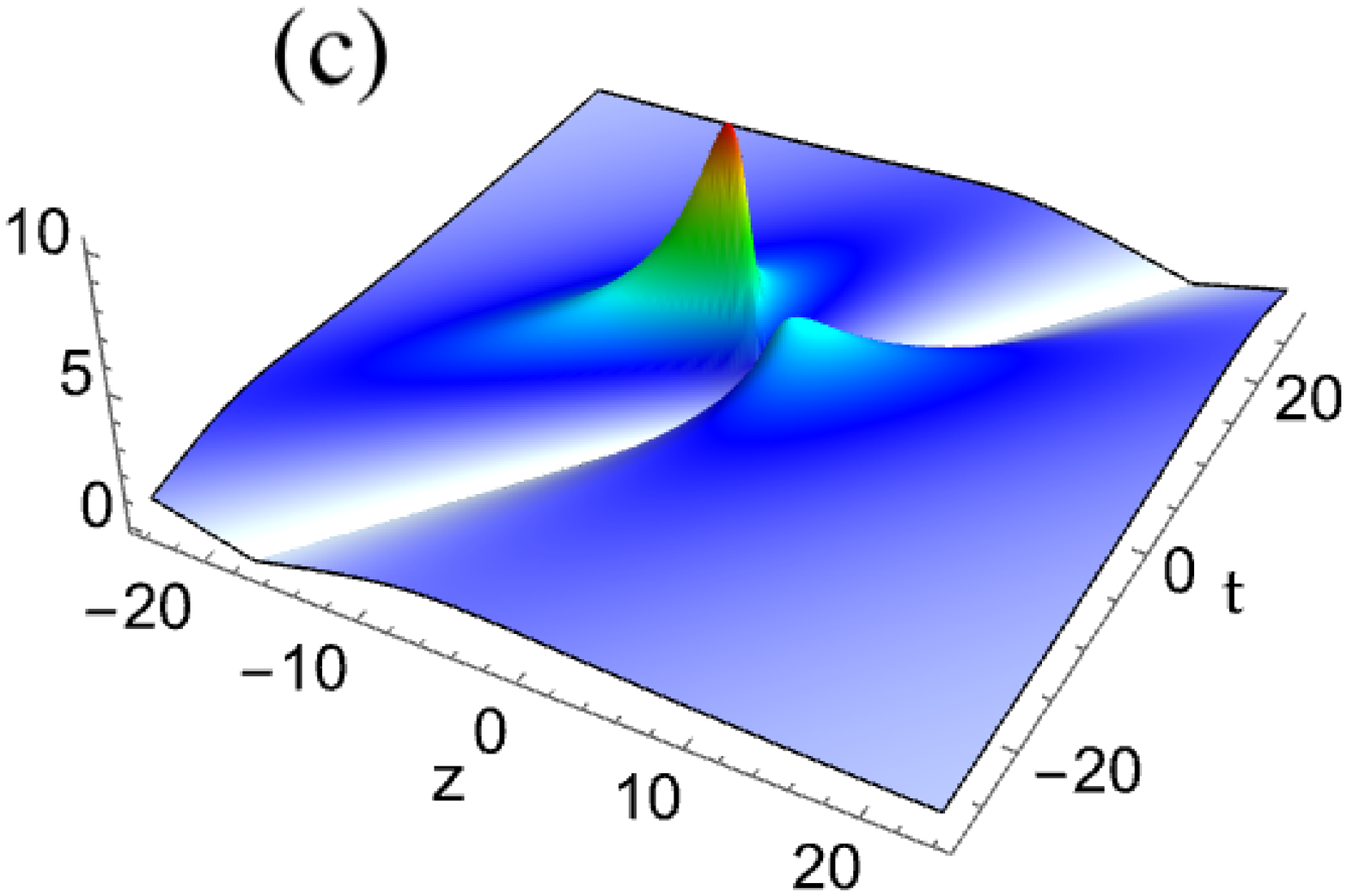}~~\includegraphics[width=0.325\linewidth]{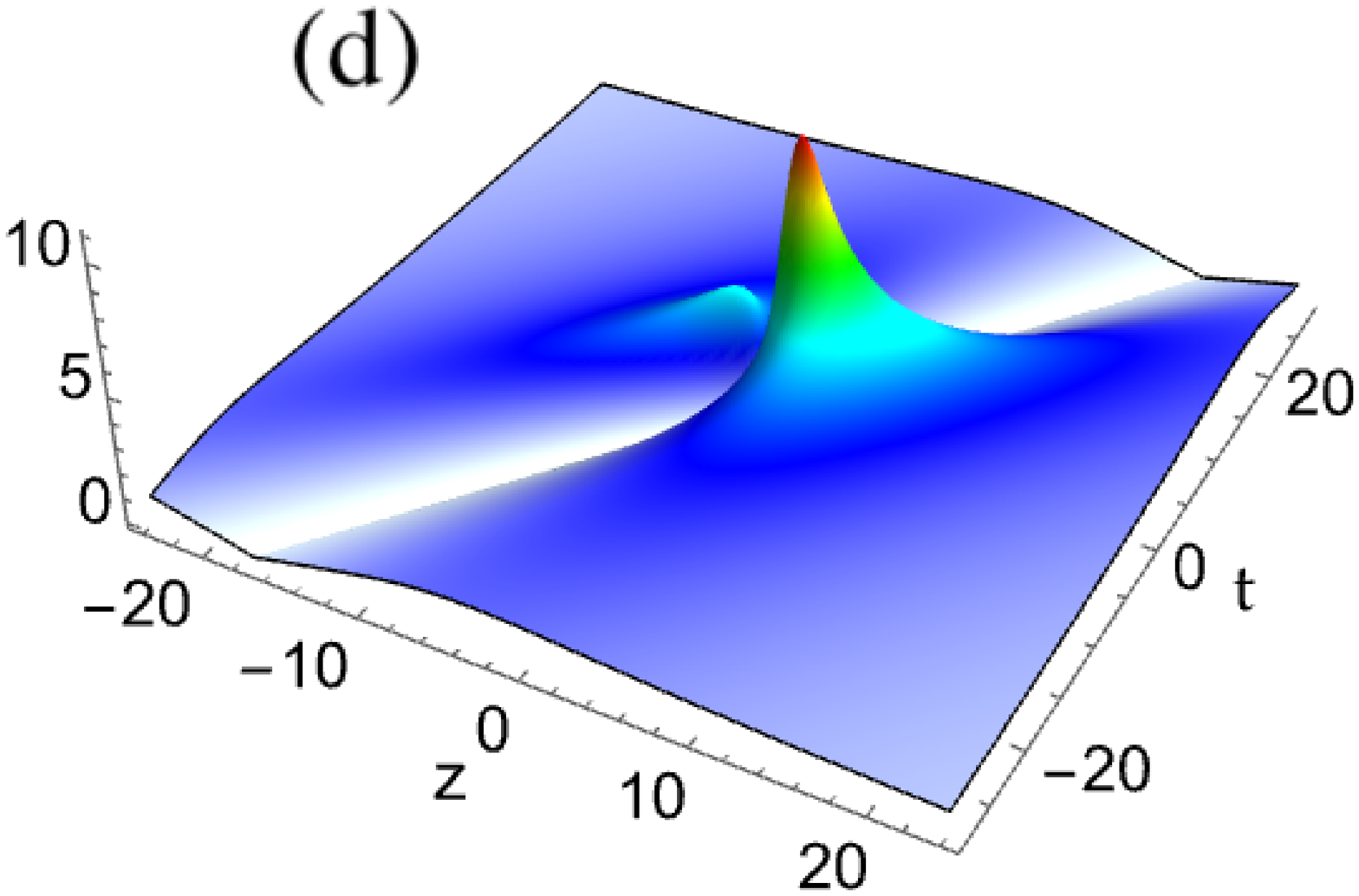}
		\caption{Dynamics of (a,b) symmetric and (c,d) asymmetric peak rational (rogue) wave $|u|$ given by (\ref{eq910}) on stable ($f=g=0$) background in $z-t$ plane at $x,y=0.4$ for $\alpha_0=0.5$, $\alpha_1=0.15$, $\beta_1=0.5$, $\beta_3=1.25$, $\gamma_1=0.5$, $\gamma_3=0.5$, and $\gamma_4=0.05$ with (a,b) $\alpha_3=0.26$, (c) $\alpha_3=0.56$, and (d) $\alpha_3=0.06$.}
		\label{rogue-const}
	\end{figure}
	
	When we opt for background to take any arbitrarily varying spatio-temporal form from several combinations of Jacobi elliptic functions (\ref{jacobi}), it influences the rogue wave to undergo significant changes in its overall dynamics. 
	For a constant background ($f,g=0$) it is smooth, whereas for varying $f(t-z)$ and $g(t+z)$ arising from (\ref{jacobi}) with $k,m=0$ we can witness that the rogue wave escalate onto the single- and double-periodic backgrounds with modulation in its amplitude, width and tail characteristics. Important identities that execute for background modulation are its amplitude which increases with suppression in the width and its tails leading to an elongation or suppression as depicted in the top panel of Fig. \ref{rogue-modu}. Similar to the kink soliton case, here also the single or double localized wave(s) background (for elliptic moduli $m,n=1$) and combined periodic+localized (for $k=0$ and $m,n=1$) wave background reveal the persisting rogue wave with controllable modulation in its nature. Especially, the nature of rogue wave can be switched between its symmetric and asymmetrically localized excitations. More interestingly, the present rogue wave appear on the top of bell-type (sech) background solitary wave with increased peak amplitude and tail deformation as evidenced from the middle panel of Fig. \ref{rogue-modu}. The single and double kink-type backgrounds (for $k=1$) shall keep the rogue wave unaffected with increased step-like side-bands resembling the simultaneous occurrence of rogue wave with interacting kinks, while an additional periodic function induces asymmetry in peak amplitude and partially degenerates onto the step background, refer to the bottom panel of Fig. \ref{rogue-modu}. To get an important insight of these controllable characteristics and for completeness, we have given a set of graphical demonstrations showcasing the modulation of rational rogue waves due to the periodic, double-periodic, localized, double-localized, and combined type backgrounds in Fig. \ref{rogue-modu} for appropriately chosen choices of parameters as given in the caption.
	\begin{figure}[h]
		\centering
		\includegraphics[width=0.325\linewidth]{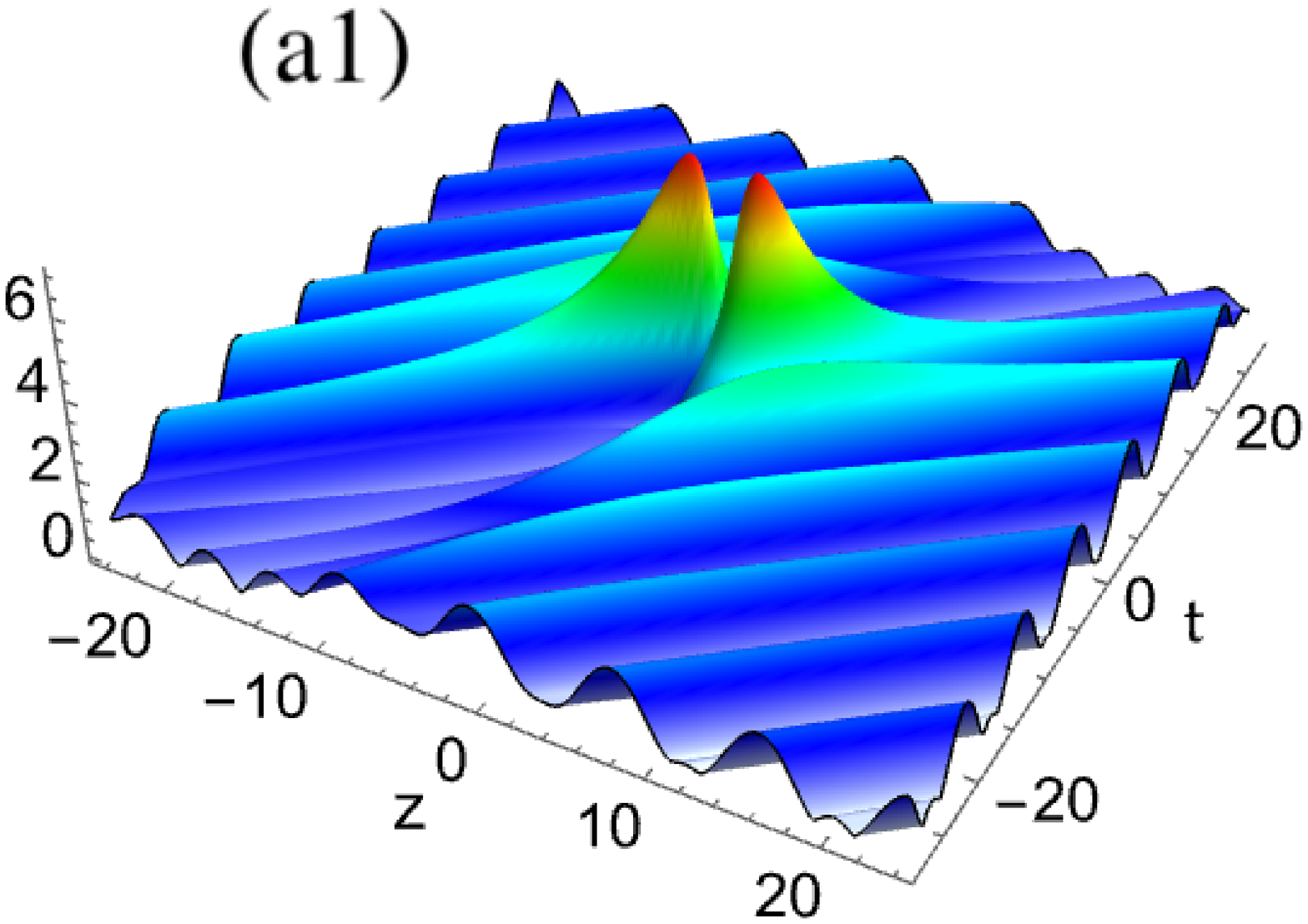}~~\includegraphics[width=0.325\linewidth]{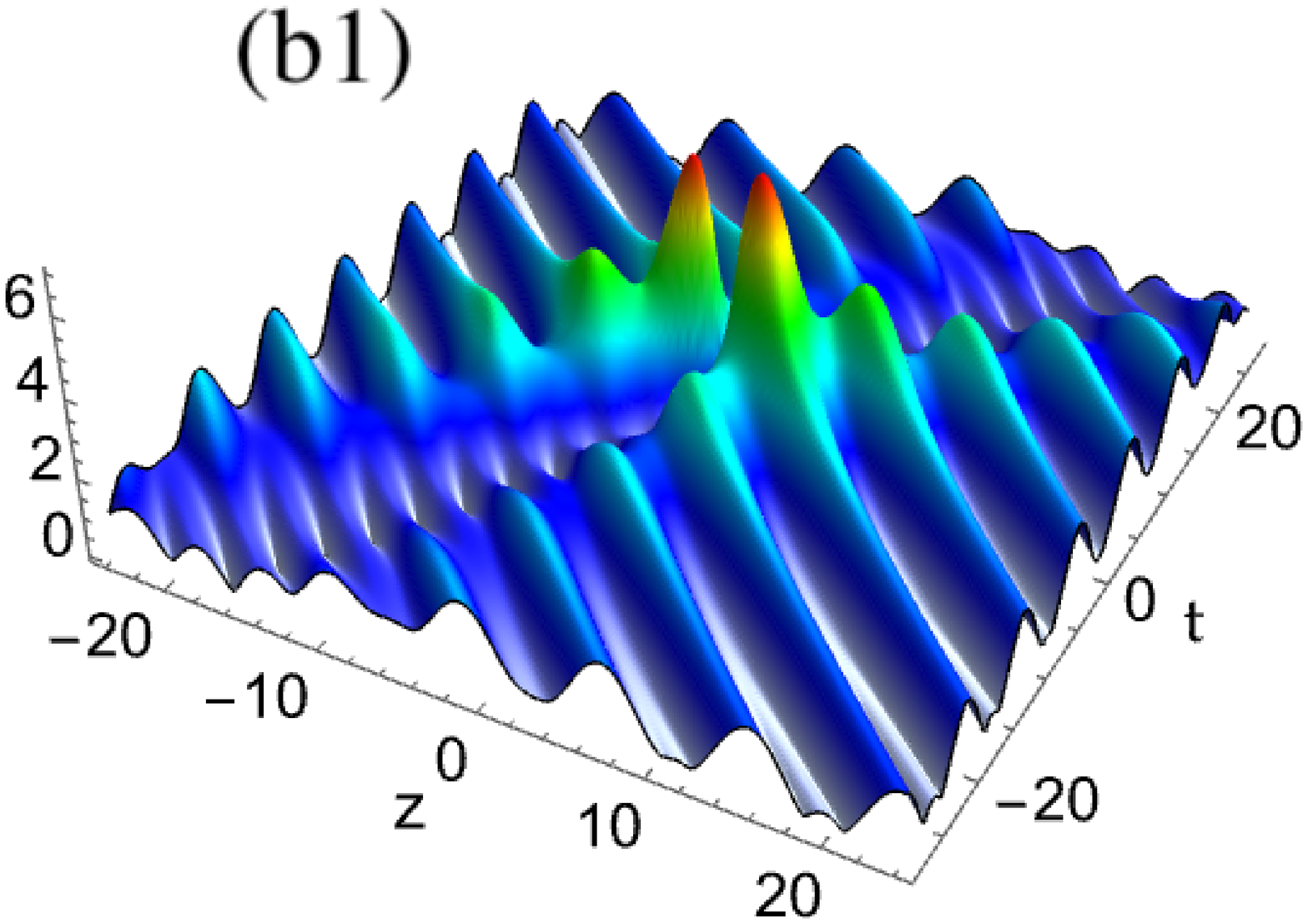}~~\includegraphics[width=0.325\linewidth]{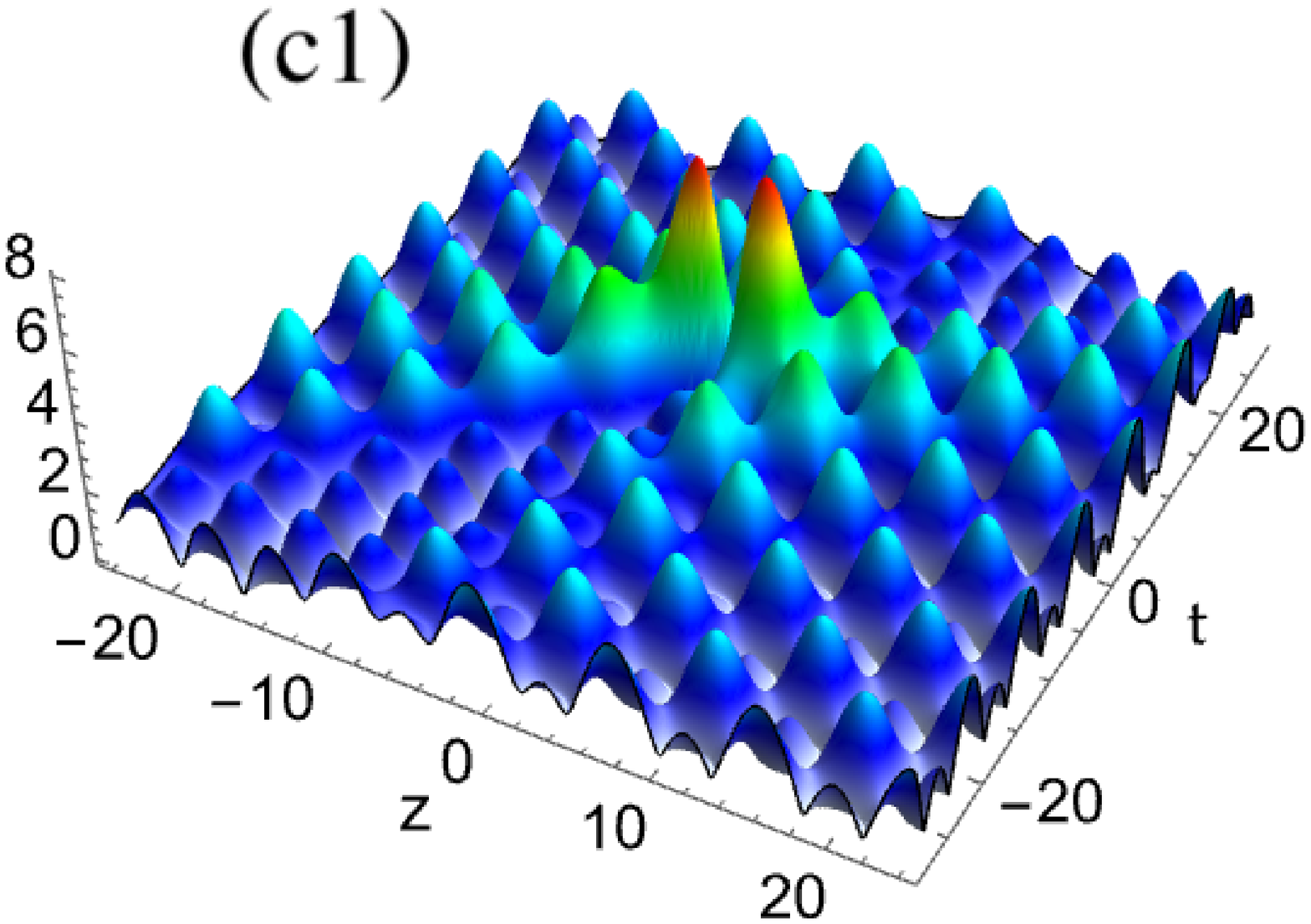}~~\\
		\includegraphics[width=0.325\linewidth]{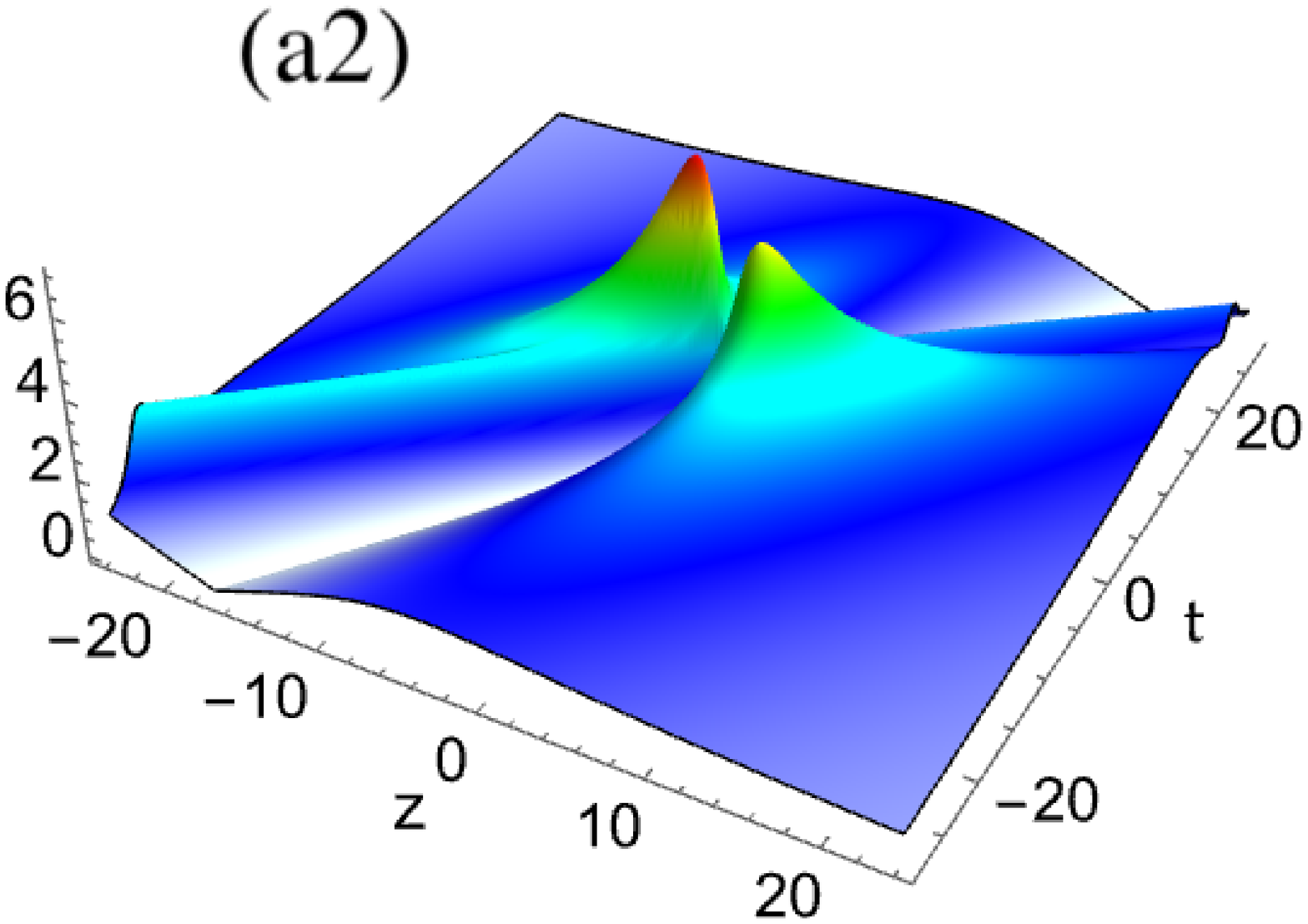}~~\includegraphics[width=0.325\linewidth]{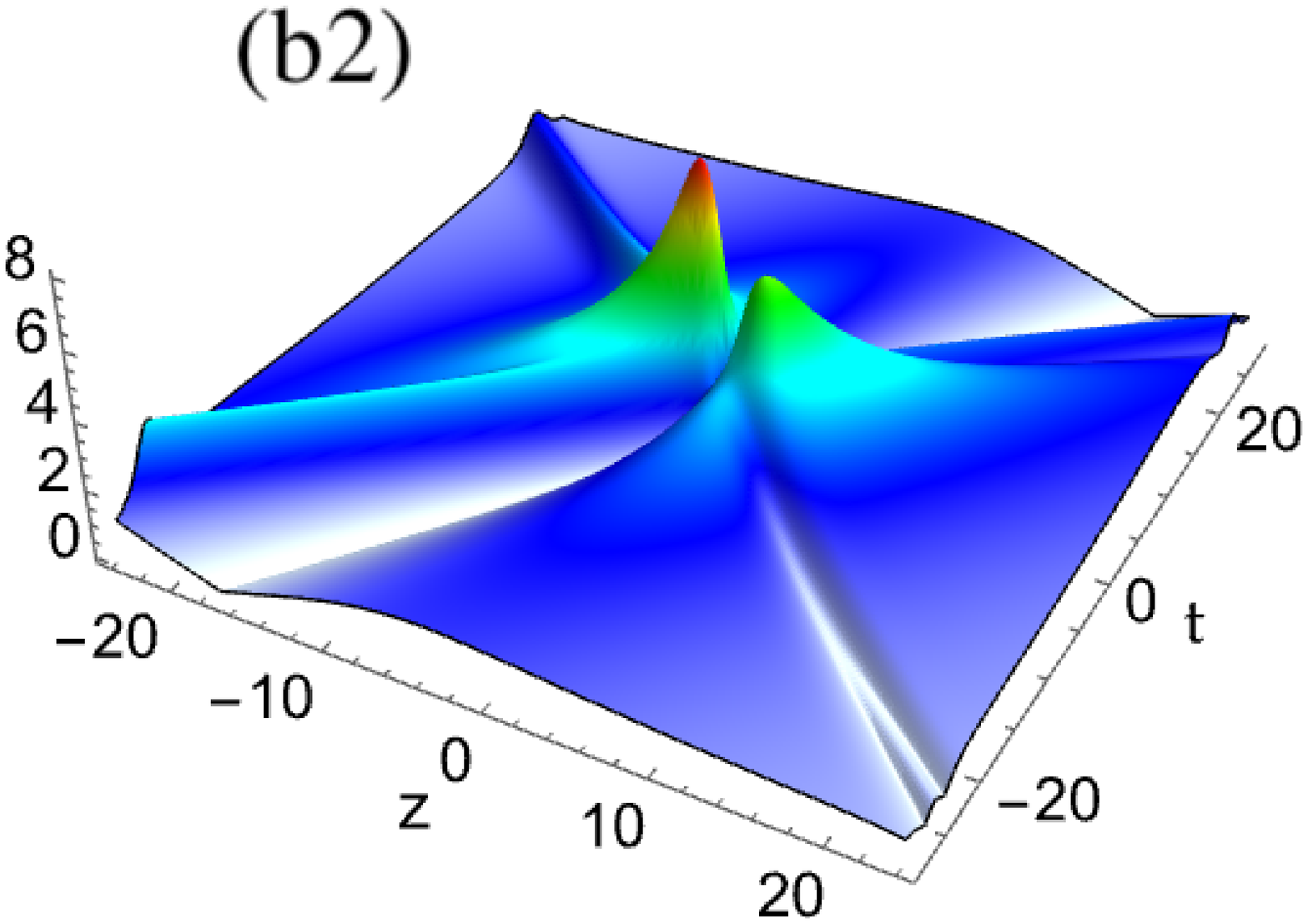}~~\includegraphics[width=0.325\linewidth]{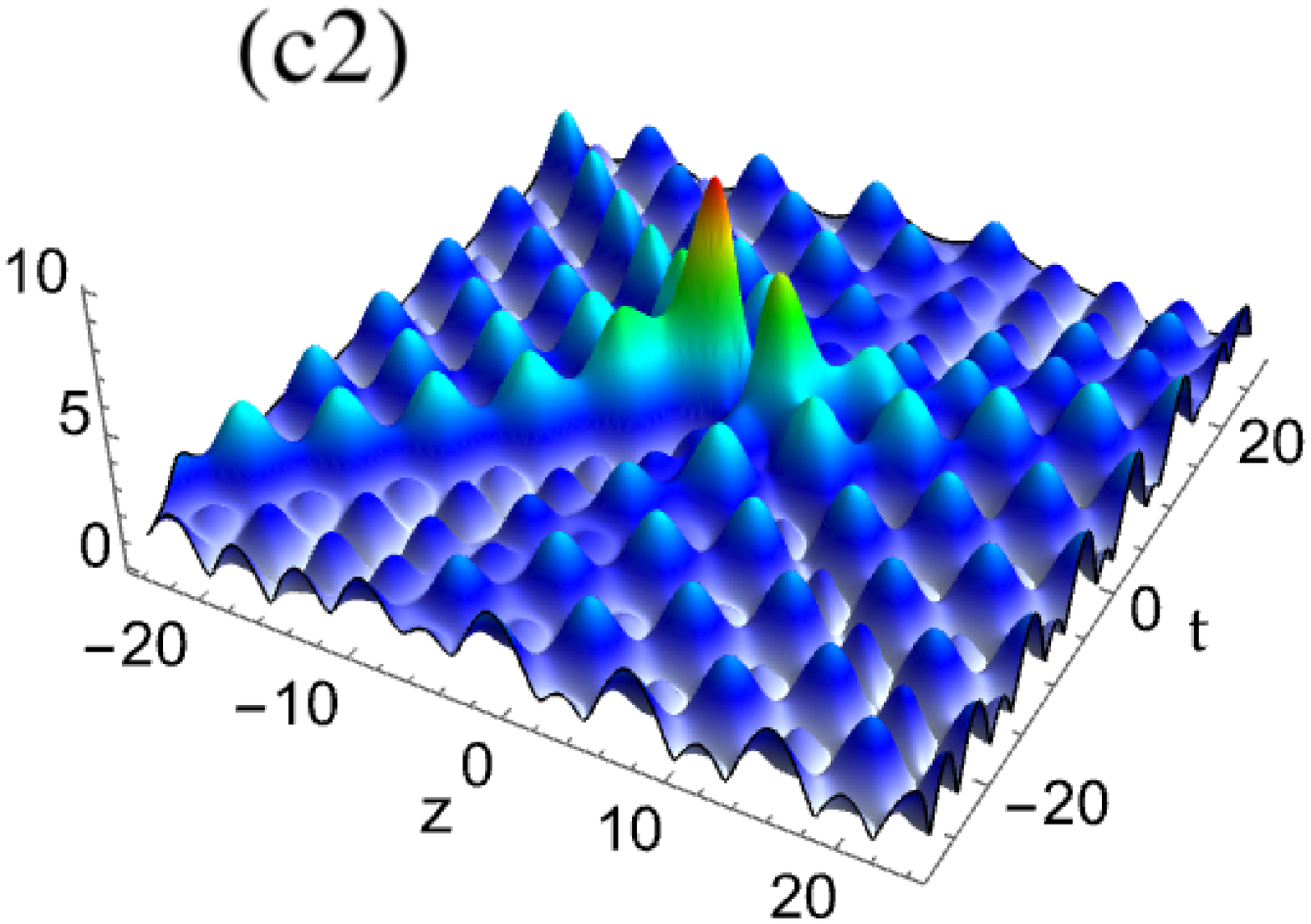}~~\\
		\includegraphics[width=0.325\linewidth]{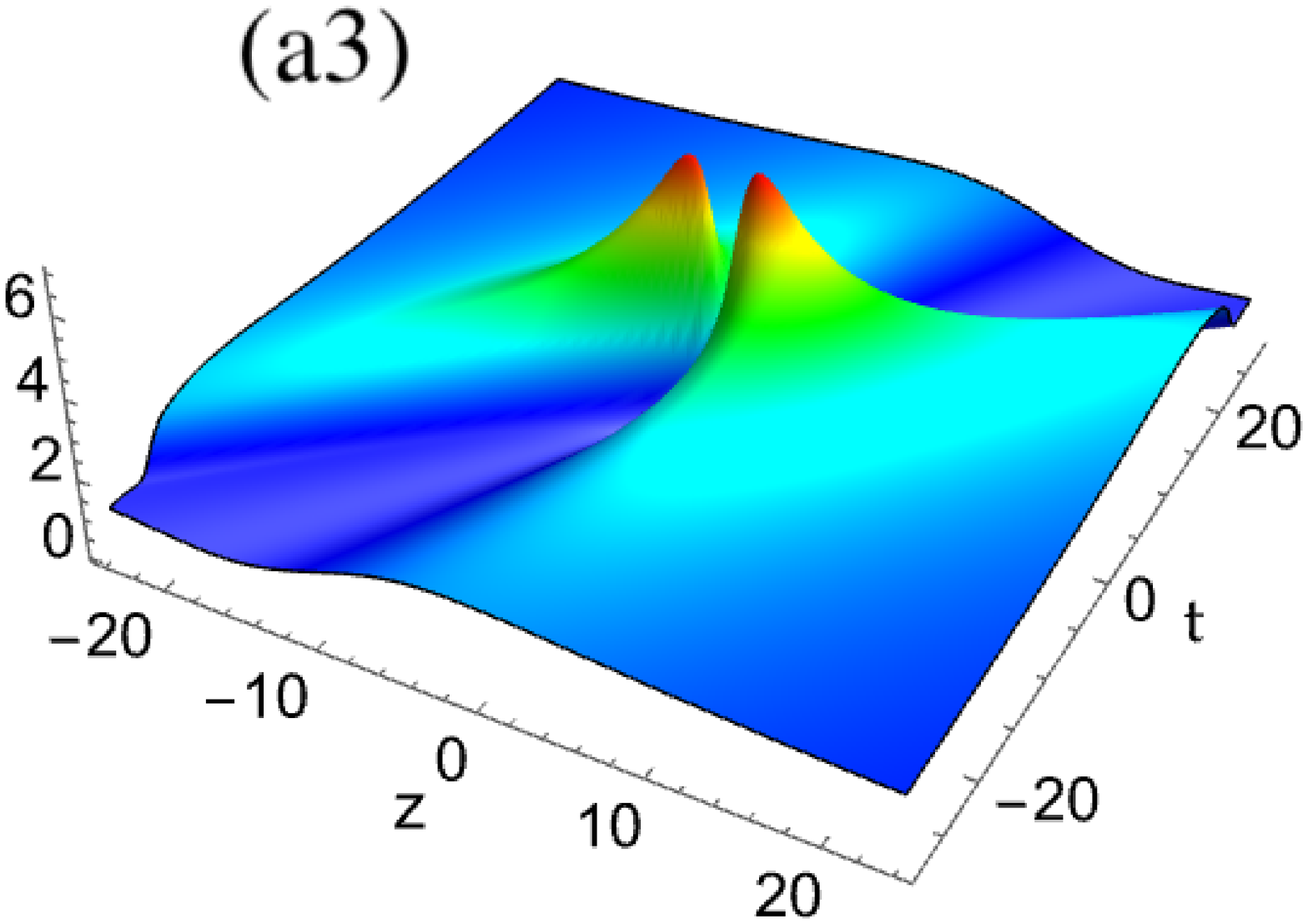}~~\includegraphics[width=0.325\linewidth]{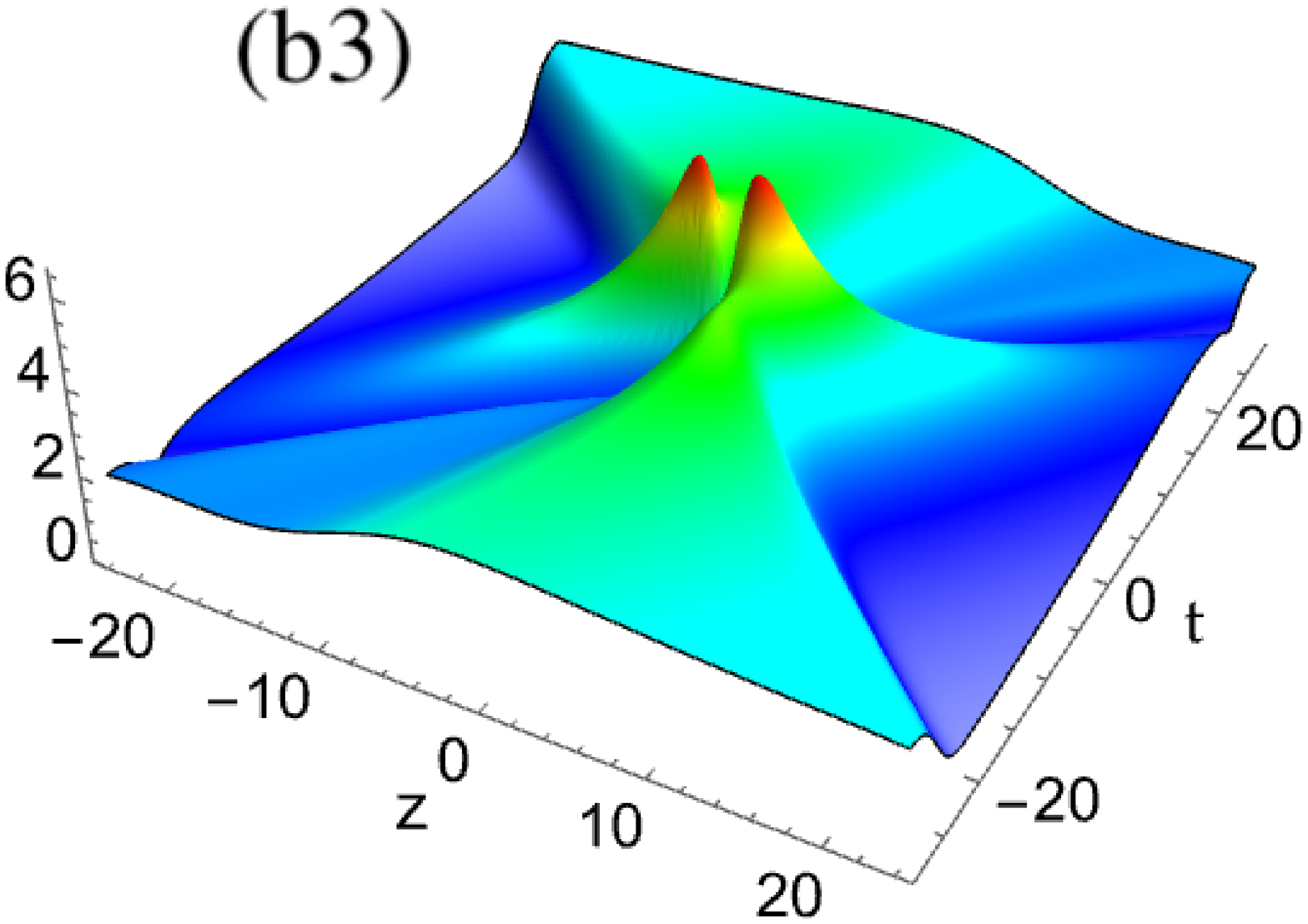}~~\includegraphics[width=0.325\linewidth]{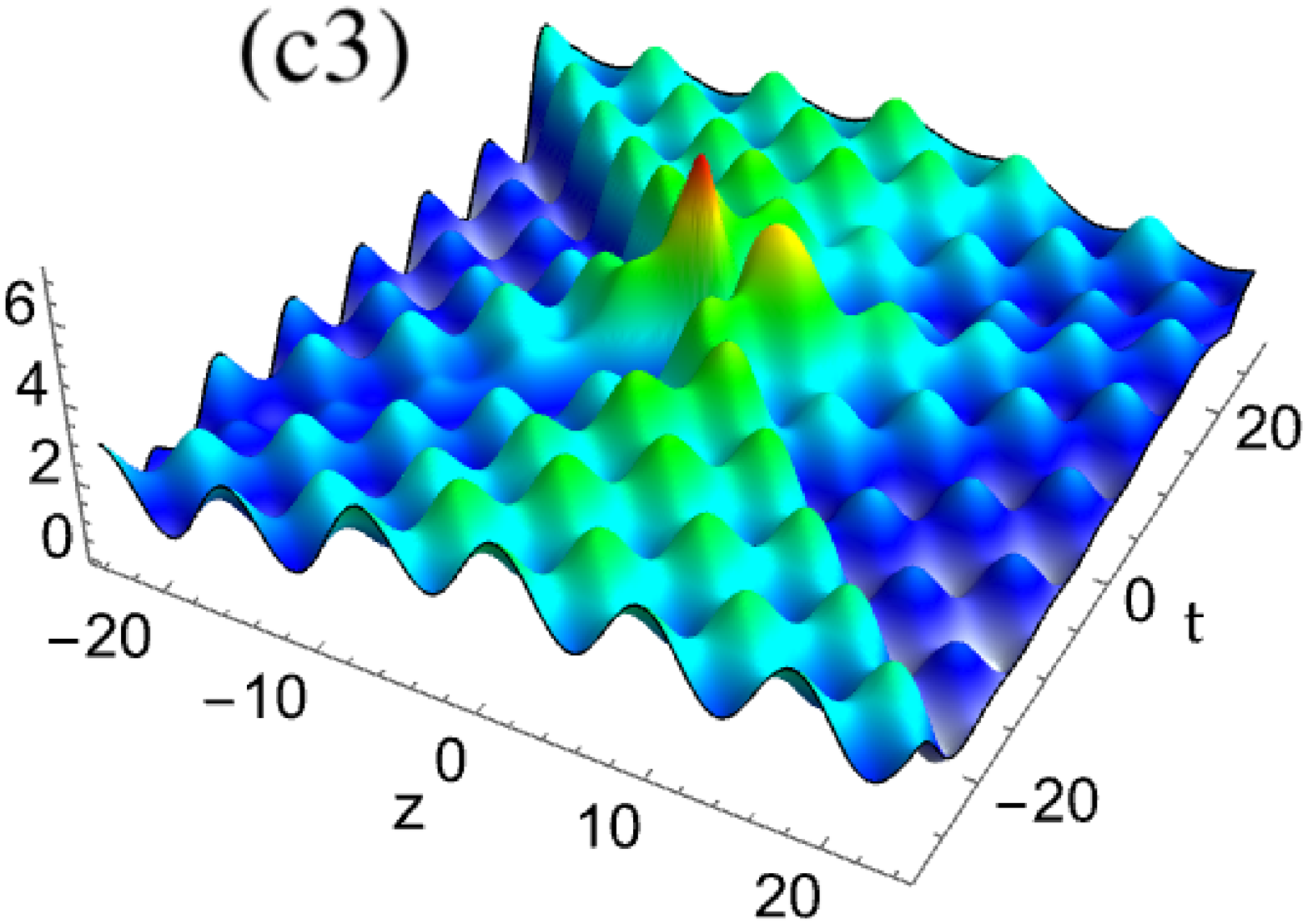}
		\caption{Role of different periodic, localized, and combined wave backgrounds (\ref{jacobi}) on the dynamics of rational rogue wave $|u|$ given by (\ref{eq910}) along $z-t$ plane at $x,y=0.4$ with other parameters as in Fig. \ref{rogue-const}(a,b). The parametric values considered are (a1) $a_1=1.0,a_2=0.75$ and (b1) $b_3=1.0$, $b_4=0.75$ for periodic, (c1) $a_1,b_3=1.0$, $a_2,b_4=0.75$ for double-periodic with $k,m=0$, (a2) $a_3=2.0$, $a_4=0.75$ for single-localized, (b2) $a_3=2.0, b_3=1.0$, $a_4,b_4=0.75$ for double-localized, (c2) $a_1,b_1,b_3=1.0,a_3=2.0$, $a_2,a_4,b_2,b_4=0.75$ for combined with $k=0,m=1$, (a3) $a_1=1.0$, $a_2=0.75$ for a kink, (b3) $a_1,b_1=1.0$, $a_2,b_2=0.75$ for interacting kinks, and (c3) $a_1,b_1=1.0$, $a_3,b_3=0.5$, $a_2,b_2,a_4,b_4=0.75$ for double-periodic interacting kinks with $k=1,m=0$, type backgrounds by taking the other values as zero.} 
		\label{rogue-modu}
	\end{figure}

\section{Results and Discussion} \label{sec4}
{Recently, the impact of nonlinear propagation of periodically varying background (dn- and cn-periodic waves) on the solitons and rogue waves has been observed experimentally in the NLS framework with arbitrarily shaped light waves in optical fibers and a common water wave tank \cite{PRR20,pnas21,FiPexp}. So, it is of potential physical importance to understand such scenario in other nonlinear models too. Also, it is necessary to identify possible mathematical tool that can be instrumental for extending the analysis to a wider class of nonlinear systems. Thus the methodology and results presented in this manuscript serve the required purposes and they will be helpful for characterising the dynamical behaviour of nonlinear waves on different backgrounds.} 
{Particularly, the present route of extracting nonlinear wave solutions has an extra advantage over Darboux transformation and other methods for NLS, KdV, sG, Hirota and their coupled family of equations \cite{epjst,PHD20,RSPA,SiAM,PRE21,  NLS21,PS20, Chaos21,PLA21,CNSNS21,mmas21,PS21,EPJP21,NLD20,CSF19,AML20, RSPA20,AML21,FiP20,PRR21}, in terms of reducing the mathematical/computational complexity as well as a richer variety of solution profiles. To be precise, we have utilized simple exponential and polynomial type test functions as initial seed solutions to obtain the kink soliton (\ref{eq96}) and rogue wave (\ref{eq910}), respectively, which manifested themself to produce various wave phenomena due to the available arbitrary backgrounds (\ref{jacobi}). We have observed the transition from standard kink soliton to periodic solitons (a1 \& b1), double-periodic excitations of kink soliton (c1), elastic (shape-preserving) collisions of bright-kink solitons (a2), double-bright-kink solitons (b2), two-kink solitons (a3), and three-kink solitons (b3) along with collision of periodic solitons (c2 \& c3) as shown in Fig. \ref{kink-peri}. On the other hand, the double-peak rogue wave (\ref{eq910}) revealed an interesting switching possibilities between symmetric and asymmetric amplitude (Fig. \ref{rogue-const}) along with their collisions and coexistence with bright and kink solitons driven by the backgrounds as shown in Figs. \ref{rogue-modu}(a2), (b2), (a3), (b3). Further, the occurrence of rogue wave deformations due to periodic and double-periodic backgrounds also observed Figs. \ref{rogue-modu}(a1), (b1), (c1), (c2), (c3) for suitable choices of arbitrary parameters. For a better understanding on the impact of backgrounds, we have shown the kink soliton and rogue wave without background in Fig. \ref{kink-zero} and Fig. \ref{rogue-const}. The previously reported solutions of (3+1)D KPB model in Refs. \cite{amwt, sun, lkaur, mplb, jpy, lkaur2, wli, ldm,mma21}, especially the soliton solutions \cite{amwt} and rogue waves \cite{sun,mma21} obtained through Hirota bilinear formalism does not have any varying background. However, such controllable background in the present work offered much freedom to manipulate the nonlinear waves accordingly and displayed diverse wave phenomena.}

\textcolor{black}{As mentioned in the introduction, the objectives of the present work:- (i) The search for a simple mathematical tool which explored quite complex wave phenomena. (ii) Dynamics of localized nonlinear waves on sn-, cn-, and dn-periodic backgrounds is studied for the first time to the present model, are successfully achieved. }
\textcolor{black}{ It is worth to highlight that the solutions constructed in the present work through auto-B\"acklund transformation have several additional advantages over the other available methods. Apart from being mathematically simpler and comparatively easier, importantly, the procedure can be applicable for any scalar as well as vector/coupled models in both one- and higher-dimensions. The significance of this approach is further strengthened by its utilizability to construct nonlinear waves on variable backgrounds, which helps the current emerging direction in the study of nonlinear wave dynamics and their manipulation in inhomogeneous media.} 

\textcolor{black}{We are happy to point out certain future directions based on the present study. Apart from the above two types of localized waves and brief discussion on their dynamics, one can extend the present analysis to unravel the features of several other nonlinear wave structures including multiple solitons, general Akhmediev, and Kuznetsov-Ma type breathers, higher-order rogue waves, etc. admitting richer characteristics by appropriately choosing the seed solutions in a straightforward manner. The present work dealt with the effect of background which changes along one spatial dimension along with time coordinate only. So, the inclusion of other two spatial dimensions in the background and investigating the resulting dynamics will be another interesting question. Furthermore, the present analysis can be implemented for various nonlinear wave models, especially higher-dimensional nonlinear equations, to construct localized waves on controllable backgrounds and investigate their underlying dynamics. }

\section{Conclusions}\label{sec5}
{Motivated to study the dynamics of localized nonlinear waves on variable backgrounds in higher-dimensional model, we have considered a (3+1)-dimensional nonlinear equation, referred to Kadomtsev-Petviashvili-Boussinesq model, describing the dynamics of water waves. We have proposed a simple theorem/tool based on the auto-B\"acklund transformation using a truncated Painlev\'e approach and constructed general nonlinear wave solutions along with the significant varying background. By choosing an appropriate initial seed solution we have obtained two types of nonlinear waves of different characteristics, namely kink soliton (\ref{eq96}) and rational rogue wave (\ref{eq910}). The importance of backgrounds (\ref{jacobi}) on such non-trivial structures becomes an exciting option for modulating their dynamics with suitable parameters. We have observed a few features of the physically attractive nature using periodic, localized and combined (periodic+localized) backgrounds chosen through Jacobi elliptic functions which induce significant changes in the dynamics of initial kink soliton and rational (rogue) waves and demonstrated them graphically. This is achieved by implementing the combination of three elliptic functions (sn, cn and dn) as background to the standard nonlinear waves. Particularly, our analysis clearly show the transition of kink soliton to the formation of periodic solitons, double-periodic excitations, shape-preserving collisions with bright or kink soliton(s) (Fig. \ref{kink-peri}) as well as symmetric-asymmetric rogue wave conversion, collisions and coexistence of rogue waves with bright/kink solitons and the deformations of rogue wave (Fig. \ref{rogue-modu}) due to the controllable background as an important factor.} 
The present study can be extended to obtain several other nonlinear wave solutions and to unravel their importance due to backgrounds. The presented theoretical results will be an essential addition to the context of nonlinear waves and stimulate further interest in different wave phenomena in various other nonlinear models both in the theoretical and experimental perspectives.\\

	
\setstretch{1.0}	
	\noindent{\bf Acknowledgements}\\
	One of the authors Sudhir Singh would like to thank the National Institute of Technology Tiruchirappalli and the Ministry of Human Resource Development, Govt. of India, for the financial support through institute fellowship. The research work of K. Sakkaravarthi was supported by the Korean Ministry of Education Science and Technology through Young Scientist Training (YST) Program of the Asia-Pacific Center for Theoretical Physics (APCTP), Pohang-si, Gyeongsangbuk-do.  
	The authors are thankful to the editor and anonymous reviewers for providing valuable comments that immensely improved the manuscript.\\
	
	\noindent{\bf CRediT Authorship Contribution Statement}\\ {\bf Sudhir Singh}: Methodology, Writing - Original Draft Preparation, Writing - Review \& Editing; {\bf K.Sakkaravarthi}: Conceptualization, Validation, Formal Analysis and Investigation, Writing - Original Draft Preparation, Writing - Review \& Editing; {\bf K. Murugesan}: Resources, Writing - Review \& Editing, Funding Acquisition, Supervision.
	
	\noindent{\bf Declaration of Competing Interest}\\ The authors declare that they have no known competing financial interests or personal relationships that could have appeared to influence the work reported in this paper.
	
	
\end{document}